\documentclass[journal]{IEEEtran}
\usepackage{amsmath,amsfonts,amssymb}
\usepackage{graphicx}
\usepackage{cite}
\usepackage{booktabs}
\usepackage{multirow}
\usepackage{array}
\usepackage{placeins}
\usepackage{algorithm}
\usepackage{algpseudocode}
\usepackage[colorlinks,citecolor=blue,linkcolor=blue]{hyperref}
\hypersetup{pdfauthor={Tuo Wu, Zhentian Zhang, Ye Tian, and Hing Cheung So}}
\graphicspath{{figures/}}

\newcommand{\R}{\mathbb R}
\newcommand{\E}{\mathbb E}
\newcommand{\tr}{\operatorname{tr}}
\newcommand{\diag}{\operatorname{diag}}
\newcommand{\rank}{\operatorname{rank}}
\newtheorem{proposition}{Proposition}
\newtheorem{lemma}{Lemma}
\newtheorem{corollary}{Corollary}
\newtheorem{remark}{Remark}
\newcommand{\LocEtBlockHalf}{0.39}
\newcommand{\LocStaticBlockHalf}{0.55}
\newcommand{\LocLosBlockHalf}{4.55}
\newcommand{\LocEtBlockFull}{0.49}
\newcommand{\LocStaticBlockFull}{0.65}
\newcommand{\LocOnlineJump}{0.65}
\newcommand{\LocOnlineEnd}{0.43}
\newcommand{\LocFrozenEnd}{0.63}
\newcommand{\LocStaticOnlineEnd}{0.99}
\newcommand{\LocMapJump}{2.03}
\newcommand{\LocMapEnd}{0.90}
\newcommand{\LocFrozenMapEnd}{1.91}

\newcommand{\LocWallStaticEnd}{1.01}
\newcommand{\LocWallEtEnd}{0.38}
\newcommand{\LocOneApEt}{0.65}
\newcommand{\LocThreeApEt}{0.27}
\newcommand{\LocPebOne}{0.45}
\newcommand{\LocPebFour}{0.27}
\newcommand{\LocLosAvailability}{27.60}
\newcommand{\LocLosConditional}{0.42}

\newcommand{\LocFairWindowPost}{0.45}
\newcommand{\LocFairRlsPost}{0.49}
\newcommand{\LocServiceLosHalf}{21.20}
\newcommand{\LocServiceStaticHalf}{58.20}
\newcommand{\LocServiceEtHalf}{84.80}
\newcommand{\LocServiceStaticFull}{40.00}
\newcommand{\LocServiceEtFull}{70.20}

\newcommand{\LocDiverseWall}{0.373}
\newcommand{\LocDiverseGp}{0.379}

\newcommand{\LocDiverseHuber}{0.380}

\newcommand{\LocKnnDense}{0.768}
\newcommand{\LocGpDense}{0.372}
\newcommand{\LocHuberDense}{0.384}
\newcommand{\LocContamLs}{1.932}
\newcommand{\LocContamHuber}{1.384}
\newcommand{\LocContamCauchy}{0.973}

\begin{document}

\title{Turning Multipath into Positioning Information: Persistent Electromagnetic-Twin DoA Localization}

\author{Tuo~Wu,
        Zhentian~Zhang,
        Ye~Tian,
        and~Hing~Cheung~So,~\IEEEmembership{Fellow,~IEEE}%
\thanks{\textit{Corresponding author: Ye Tian.}}
\thanks{T. Wu is with the School of Electronic and Information Engineering,
South China University of Technology, Guangzhou 510640, China
(e-mail: wutuo@scut.edu.cn).}
\thanks{Z. Zhang is with the National Mobile Communications Research
Laboratory, Frontiers Science Center for Mobile Information Communication
and Security, Southeast University, Nanjing 210096, China
(e-mail: zhangzhentian@seu.edu.cn).}
\thanks{Y. Tian is with the Faculty of Electrical Engineering and Computer
Science, Ningbo University, Ningbo 315211, China
(e-mail: tianye1@nbu.edu.cn).}
\thanks{H. C. So is with the Department of Electrical Engineering,
City University of Hong Kong, Hong Kong, China
(e-mail: hcso@ee.cityu.edu.hk).}}

\maketitle

\begin{abstract}
Reflected paths can support localization when line-of-sight (LoS) links are
blocked, but their usefulness depends on how accurately the stored model
predicts their directions. This paper develops electromagnetic-twin (ET)
direction-of-arrival (DoA) localization with a persistent, physically
calibrated path-angle field. A nominal floor plan supplies virtual-path
geometry, and sparse reference probes at known positions learn a compact
directional correction. Field order and regularization are selected from
calibration data alone. For each user, unordered DoAs are processed by joint
position estimation and one-to-one path association, using assignment-aware
multistart search and robust Gauss--Newton refinement. New trusted probes
update the stored correction through recursive least squares (RLS), without
reprocessing all previous measurements. The analysis establishes calibration
rank conditions, objective descent, and conditional information gains from
reflected paths, while also distinguishing the decay of initialization bias
from the convergence of localization error. Synthetic experiments use 500 queries per
operating point. At 50\% LoS blockage and $1.5^\circ$ DoA error, the
calibrated ET achieves \LocEtBlockHalf\ m root-mean-square error, versus
\LocStaticBlockHalf\ m for nominal geometry. After a wall change, its final
error is \LocOnlineEnd\ m, versus \LocFrozenEnd\ m for a frozen calibrated
map. Equal-evidence comparisons show competitive calibrated alternatives
and faster abrupt-change recovery with sliding-window updates than with
the configured RLS. The results support reusable directional calibration,
not a universally superior map or update rule, and validation is limited to
synthetic first-order reflections with trusted calibration labels.
\end{abstract}

\begin{IEEEkeywords}
Electromagnetic twin, direction-of-arrival, multipath-assisted localization,
virtual anchor, path association, robust positioning, Fisher information,
online calibration.
\end{IEEEkeywords}

\section{Introduction}
\IEEEPARstart{A}{ccurate} wireless localization supports location-aware
communications, indoor navigation, and integrated sensing and communication
services~\cite{wymeersch,liuISAC}. Antenna arrays provide direction-of-arrival
(DoA) measurements that relate a user's position to the known locations of
access points (APs). Combined angle and delay measurements can further support
position and orientation estimation in millimeter-wave systems
and three-dimensional indoor positioning~\cite{shahmansoori,wenJoint}.
The accuracy of these estimates depends not only on measurement noise but
also on the available propagation geometry, as established by information
analysis of localization accuracy~\cite{winFundamental}. In indoor environments, however,
walls and obstacles may block direct links while producing several reflected
arrivals. How to use these arrivals reliably is therefore an important
question for maintaining positioning service.

Multipath can be treated as either an impairment or a source of geometric
information. Rejecting non-line-of-sight (NLoS) measurements protects an
estimator designed for direct propagation, but may leave too few independent
bearings when line-of-sight (LoS) links are blocked. Alternatively, a
specular reflection can be represented through a mirrored anchor or user,
allowing the reflected direction to constrain position. Leitinger
\emph{et al.}~\cite{leitinger} quantified the position-related information in
multipath components, while Meissner \emph{et al.}~\cite{meissner} evaluated
multipath-assisted indoor navigation using measured ultra-wideband signals.
These results support using suitable reflected paths rather than discarding
all NLoS observations~\cite{witrisal}. For angle-based localization,
multipath-aware reference models likewise provide information beyond direct
bearings~\cite{wielandtAoA}.

When reflector locations are unknown, simultaneous localization and mapping
(SLAM) can estimate the user and environmental features together.
Channel-SLAM~\cite{gentner} and belief-propagation-based multipath
SLAM~\cite{geometricSLAM} exploit the consistency of propagation paths across
observations. Related work extends this principle to situational awareness
in millimeter-wave massive multiple-input multiple-output (MIMO)
systems~\cite{mendrzik}. These studies establish that propagation structure
can be learned and reused. A complementary situation arises when an initial
floor plan is already available but does not accurately predict the measured
directions. In that case, discovering the entire environment again may be
unnecessary, while the immediate task is to calibrate the useful paths in the
existing model.

Environment-dependent radio knowledge can also be retained without explicitly
estimating reflector geometry. Channel charting constructs a radio
representation from accumulated channel state information (CSI)
~\cite{studerc}, and channel knowledge maps (CKMs) organize location-indexed
channel descriptors for subsequent communication tasks~\cite{zengCKM}.
Digital-twin-assisted
fingerprinting further uses a digital replica to generate location-dependent
MIMO signatures~\cite{morais}. These approaches offer different interfaces
for reusing environmental information. Their accuracy in a physical site
nevertheless depends on the agreement between the retained representation
and the observations available during operation.

Digital-twin channel research addresses this agreement through synchronization
between physical observations and a digital propagation
model~\cite{wangDTC,liOnlineTwin}. Recent electromagnetic-twin (ET) studies
connect sparse evidence to an updated state and subsequent queries.
Wu \emph{et al.}~\cite{wuETCompletion} instantiate this interface through
channel-gain completion, while recurrent wireless-state updates support
AP association and beam selection~\cite{wuETPersistent}.
Persistent CSI estimation instead reuses angle--delay information across
pilot blocks~\cite{wuETCSI}, and ET beamforming converts sparse probes
into covariance-based sum-rate optimization~\cite{wuETBeamforming}.
These studies motivate retained, task-specific states but do not address
the path-angle calibration and unknown DoA association considered here.
We adopt this ET viewpoint for a directional field calibrated by reference
measurements and queried for user localization. A CKM can provide storage
and retrieval, while a broader twin may also represent scene objects,
devices, and network operation. Here, the maintained state contains only
the path-angle information needed for localization. The contribution is
its calibration and use with unordered DoAs, rather than map updating itself.

Consider an indoor deployment in which the AP locations are known and a
survey provides approximate wall positions. The nominal model predicts the
directions of direct and reflected arrivals throughout the room. A small
wall-location error can bias these predictions at many user locations,
even when the current DoAs are measured accurately. Known-position reference
probes reveal this discrepancy and can improve the model for users that were
not part of the calibration survey. If a reflecting surface subsequently
moves, the previously calibrated model can become stale. Reusing it
unchanged preserves that error, whereas rebuilding from only a few new
probes discards potentially useful earlier observations. This motivates a
directional state that is retained across queries and corrected as new
reference evidence becomes available.

Realizing this idea involves three coupled issues. First, a sparse probe
set cannot support an arbitrarily detailed correction. A low-order field may
miss spatial variation, whereas a richer field may fit calibration noise
rather than improve predictions. Second, the current user supplies an
unordered set of DoAs: the receiver must determine which modeled path
generated each observation while estimating the unknown position. Allowing
several observations to select the same path can give a misleading
initial estimate. Third, persistent operation must distinguish the
information used to locate a user from the information trusted to change
the model. Otherwise, an erroneous position or path label can be propagated
to subsequent queries.

To address these issues, we develop an ET-assisted multipath localization
framework with two separate operations: directional calibration and user
localization. The nominal floor plan supplies first-order reflection
geometry, and labelled probes at known positions fit a compact correction
to each predicted path angle. Calibration-only model selection chooses
between affine and quadratic fields. For a new user, a joint optimization
estimates position and one-to-one path assignments using the calibrated
field. Assignment-aware initialization and robust Gauss--Newton refinement
provide a practical solver for this mixed discrete--continuous problem.
After the queries at an update epoch, new trusted probes modify the stored
coefficients through recursive least squares (RLS). The corrected model
then serves the next users without requiring their positions to be known
or their trajectories to be tracked.

The study considers a two-dimensional region with known AP poses and
resolvable LoS and first-order reflected paths. Reference probes have known
positions and path labels, whereas both quantities are unknown for user
queries. This setting separates directional calibration from autonomous
reflector discovery and tests whether a compact maintained model can
improve localization under nominal-map mismatch, blockage, and environmental
change. The contributions are summarized as follows.
\begin{itemize}
\item We develop a persistent path-angle representation that combines
nominal reflection geometry, a calibrated spatial correction, and recursive
update statistics. Field order and regularization are selected from
reference measurements alone. Batch estimation and recursive updating
provide a common directional model for successive localization queries.
\item We formulate joint position estimation and one-to-one path association
as a constrained robust optimization problem. The proposed solver combines
assignment-aware multistart initialization, exact small-set association,
and backtracked Gauss--Newton refinement. We establish objective descent
and a retained-start safeguard under a common per-start budget.
\item We analyze correction-field identifiability, calibration error, and
conditional positioning information. The analysis explains how probe
geometry affects calibration and when independent reflected paths add
useful spatial constraints. It also distinguishes the decay of the initial
calibration bias from convergence of localization error.
\item We evaluate the framework using paired synthetic experiments with
500 localization queries per operating point. Comparisons with nominal
geometry, calibrated geometric and learned maps, and alternative robust
losses identify the benefit of directional calibration. Equal-evidence
online comparisons further distinguish maintaining an updated state from
the particular memory policy used to update it.
\end{itemize}

The remainder of this paper is organized as follows.
Section~\ref{sec:system} presents the system model.
Section~\ref{sec:calibration} develops directional calibration.
Sections~\ref{sec:problem} and~\ref{sec:algorithm} formulate and solve the
localization problem. Section~\ref{sec:information} analyzes positioning
information. Section~\ref{sec:numerical} presents numerical results, and
Section~\ref{sec:conclusion} concludes the paper.

\section{System Model}
\label{sec:system}
In this section, we describe the directional measurements and the
environment-dependent state used for localization. We then specify how
reference probes update this state without using unknown user positions.

\subsection{Propagation Geometry and DoA Measurements}
As illustrated in Fig.~\ref{fig:geometry}, we consider a two-dimensional
indoor region $\Omega\subset\R^2$ served by $A$ APs at known positions
$\mathbf q_a$, $a=1,\ldots,A$. The AP array orientations and nominal
floor plan $\mathcal M_0$ are known in a common coordinate frame.
A user at unknown position $\mathbf p_t=[p_{x,t},p_{y,t}]^T$ is observed
through LoS propagation and a finite set of first-order wall reflections.
Let $\ell=0$ denote the LoS path and $\ell=1,\ldots,L_a-1$ the reflected-path
candidates at AP $a$.

The array front end returns the unordered DoA set
\begin{equation}
\mathcal Z_{a,t}=\{(z_{am,t},\sigma_{am,t}^2)\}_{m=1}^{M_{a,t}},
\label{eq:doa-set}
\end{equation}
where $z_{am,t}$ and $\sigma_{am,t}^2$ denote the measured angle and its
noise variance, respectively. For a measurement generated by path $\ell$,
\begin{equation}
z_{am,t}=\theta_{a\ell}(\mathbf p_t;\mathcal M^\star)
+n_{am,t},\qquad n_{am,t}\sim\mathcal N(0,\sigma_{am,t}^2),
\label{eq:measurement}
\end{equation}
where $\mathcal M^\star$ is the physical geometry, which need not coincide
with $\mathcal M_0$. Angles are wrapped consistently to $(-\pi,\pi]$.
The numerical implementation uses the equivalent degree convention.
The path labels are unknown at query time, and LoS may be absent.
We assume resolvable modeled paths and no false detections, so that
$M_{a,t}\leq L_a$. This angle-domain model isolates map calibration and
positioning from array processing, and it does not establish that all paths
are resolvable by a given array. The experiments use a common angular
variance, while heterogeneous variances are allowed in the formulation.
The epoch index is omitted below when a single query is considered.

\begin{figure}[t]
\centering
\includegraphics[width=.96\columnwidth]{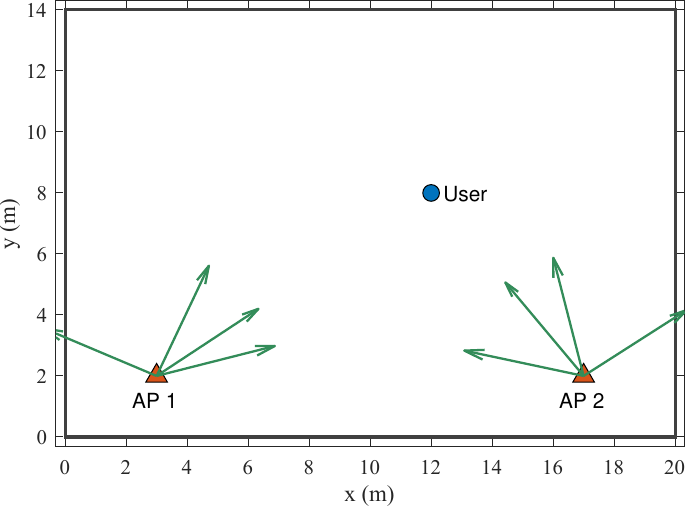}
\caption{Two-AP indoor model with direct and reflected arrival directions.}
\label{fig:geometry}
\end{figure}

\subsection{Calibrated Directional Field}
For a planar reflector, the virtual-user position is
\begin{equation}
\mathbf v_{a\ell}(\mathbf p;\mathcal M)
=\mathbf H_{a\ell}(\mathcal M)\mathbf p+\mathbf h_{a\ell}(\mathcal M).
\label{eq:virtual-user}
\end{equation}
The LoS path corresponds to $\mathbf H_{a0}=\mathbf I$ and
$\mathbf h_{a0}=\mathbf0$.
Reflection in a vertical wall $x=x_w$ gives
$\mathbf v_{a\ell}=[2x_w-p_x,p_y]^T$, and a horizontal wall is treated
analogously. The corresponding arrival direction is
\begin{equation}
\theta_{a\ell}^{\rm geo}(\mathbf p;\mathcal M)
=\operatorname{atan2}\!\left(
[\mathbf v_{a\ell}-\mathbf q_a]_2,
[\mathbf v_{a\ell}-\mathbf q_a]_1\right).
\label{eq:geometric-angle}
\end{equation}
Instead of estimating every physical wall parameter, we calibrate the
direction predicted from the nominal map:
\begin{equation}
\theta_{a\ell}^{\rm ET}(\mathbf p)
=\theta_{a\ell}^{\rm geo}(\mathbf p;\mathcal M_0)
+\boldsymbol\phi^T(\mathbf p)\boldsymbol\beta_{a\ell}.
\label{eq:et-angle}
\end{equation}
Here, $\boldsymbol\beta_{a\ell}$ contains the correction coefficients.
For nominal room width $W$ and height $H$, define
$u=(p_x-W/2)/W$ and $v=(p_y-H/2)/H$. We consider
\begin{equation}
\begin{split}
\boldsymbol\phi_1(\mathbf p)&=[1,u,v]^T,\\
\boldsymbol\phi_2(\mathbf p)&=[1,u,v,u^2,uv,v^2]^T.
\end{split}
\label{eq:features}
\end{equation}
The affine field represents angular offset and spatial slope, while the
quadratic field also captures curvature. Both preserve the nominal
reflection geometry and remain linear in their coefficients. The order
$r\in\{1,2\}$, and hence dimension $D\in\{3,6\}$, is selected from
calibration data in Section~\ref{sec:calibration}. We subsequently write
$\boldsymbol\phi=\boldsymbol\phi_r$.

\subsection{Persistent State and Update Protocol}
\label{sec:trusted-update}
At the beginning of epoch $t$, the implemented ET stores
\begin{equation}
\mathcal T_t=\{\mathcal M_0,r,\lambda_\beta,\mathbf B_t,
\{\mathbf A_{a\ell,t}\}_{a,\ell}\},
\label{eq:twin-state}
\end{equation}
where $\mathbf B_t=\{\boldsymbol\beta_{a\ell,t}\}$,
$\lambda_\beta$ is the initial ridge strength, and
$\mathbf A_{a\ell,t}$ is the inverse weighted normal matrix for recursive
calibration, not a coefficient-error covariance.
Users are first localized with the current field. New reference probes
with known positions and path labels then update the coefficients and
normal matrices for the next epoch. The same state can therefore serve
different users without tracking a particular trajectory.

Only trusted reference probes update the reported ET. Estimated user
positions are not reused as calibration labels. Self-calibration would
require independently validated position confidence, association
consistency, and innovation screening. It is not part of the tested
method. Motion tracking, delay measurements~\cite{wenJoint},
and controlled reconfigurable intelligent surface (RIS) paths are likewise extensions,
not additional sources of information in the reported results.
Setting $\mathbf B_t=\mathbf0$ recovers the nominal map, while freezing
$\mathbf B_t$ after initial calibration gives the frozen ET. The
physical-geometry reference employs $\mathcal M^\star$ and serves only as a
benchmark, not a guaranteed finite-sample error bound.

\section{Sparse-Probe Directional Calibration}
\label{sec:calibration}
We first estimate the correction field from a reference batch, and then
derive its recursive update. All model-selection decisions use reference
measurements rather than localization errors at test users.

\subsection{Batch Estimation and Model Selection}
Suppose $K$ reference probes provide known positions $\mathbf p_k$ and
path-labeled angles $\widetilde\theta_{a\ell,k}$. Their angular innovations
relative to the nominal geometry are
\begin{equation}
d_{a\ell,k}=\operatorname{wrap}\!\left[
\widetilde\theta_{a\ell,k}
-\theta_{a\ell}^{\rm geo}(\mathbf p_k;\mathcal M_0)\right].
\label{eq:calibration-innovation}
\end{equation}
Stacking $\boldsymbol\phi^T(\mathbf p_k)$ as the rows of
$\boldsymbol\Phi_K$ gives the regression model
\begin{equation}
\mathbf d_{a\ell}=\boldsymbol\Phi_K\boldsymbol\beta_{a\ell}
+\boldsymbol\epsilon_{a\ell}.
\label{eq:calibration-linear}
\end{equation}
This is a local angular model. Innovations crossing a wrapping
discontinuity require unwrapping or circular regression. Reference path
labels are supplied by the simulator, whereas query labels are unknown.
Obtaining these reference labels from real measurements remains a
separate calibration requirement.

For fixed basis and regularization $\lambda_\beta>0$, we solve
\begin{equation}
\widehat{\boldsymbol\beta}_{a\ell}
=\arg\min_{\boldsymbol\beta}
\|\mathbf d_{a\ell}-\boldsymbol\Phi_K\boldsymbol\beta\|_2^2
+\lambda_\beta\|\boldsymbol\beta\|_2^2.
\label{eq:ridge-problem}
\end{equation}
The objective is strictly convex. Setting its gradient to zero yields
\begin{equation}
\widehat{\boldsymbol\beta}_{a\ell}
=(\boldsymbol\Phi_K^T\boldsymbol\Phi_K+\lambda_\beta\mathbf I)^{-1}
\boldsymbol\Phi_K^T\mathbf d_{a\ell}.
\label{eq:ridge-solution}
\end{equation}
Regularization stabilizes a sparse fit, while its strength and the basis
order determine the bias--variance tradeoff. We select them jointly by
leave-one-out cross-validation (LOOCV)~\cite{hastieESL}.
For candidate $(r,\lambda)$, define the hat matrix
\begin{equation}
\mathbf H_{r,\lambda}
=\boldsymbol\Phi_r(\boldsymbol\Phi_r^T\boldsymbol\Phi_r+
\lambda\mathbf I)^{-1}\boldsymbol\Phi_r^T
\label{eq:hat-matrix}
\end{equation}
and residual $\mathbf e_{a\ell}=(\mathbf I-\mathbf H_{r,\lambda})
\mathbf d_{a\ell}$. Removing probe $k$ subtracts
$\boldsymbol\phi_k\boldsymbol\phi_k^T$ from the normal matrix.
The Sherman--Morrison identity therefore gives the held-out residual
\begin{equation}
e_{a\ell,k}^{(-k)}
=\frac{e_{a\ell,k}}{1-[\mathbf H_{r,\lambda}]_{kk}}.
\label{eq:press-residual}
\end{equation}
Hence one fit per candidate suffices to evaluate
\begin{equation}
{\rm CV}(r,\lambda)
=\frac{1}{K\sum_aL_a}\sum_{a,\ell,k}
\left(e_{a\ell,k}^{(-k)}\right)^2.
\label{eq:press-score}
\end{equation}
We minimize \eqref{eq:press-score} over
$\lambda\in\{10^{-5},\ldots,10^{-1},1\}$
and eligible orders, using one selected pair
$(\widehat r,\widehat\lambda)$ for all APs and paths.
The affine basis is always eligible, while the quadratic basis requires
$K\geq8$, leaving at least seven probes for six coefficients in each
held-out fit. This is an eligibility rule, not a coverage guarantee.
The selected model is refitted to the full initial batch and then kept
at the same order during online operation. With no probes, its correction
is zero. LOOCV predicts held-out angles, not unknown user positions.
It therefore provides no guarantee of improved localization
root-mean-square error (RMSE).

\subsection{Recursive Update}
For a fixed selected basis, subsequent reference probes can be assimilated
without repeating the batch fit. Suppressing path indices and indexing
individual probes by $k$, RLS~\cite{haykin} with forgetting factor
$0<\lambda_f\leq1$ gives
\begin{align}
\mathbf k_k&=\frac{\mathbf A_{k-1}\boldsymbol\phi_k}
{\lambda_f+\boldsymbol\phi_k^T\mathbf A_{k-1}\boldsymbol\phi_k},\\
\widehat{\boldsymbol\beta}_k
&=\widehat{\boldsymbol\beta}_{k-1}
+\mathbf k_k(d_k-\boldsymbol\phi_k^T\widehat{\boldsymbol\beta}_{k-1}),\\
\mathbf A_k&=\lambda_f^{-1}
(\mathbf I-\mathbf k_k\boldsymbol\phi_k^T)\mathbf A_{k-1}.
\label{eq:rls}
\end{align}
The initial inverse normal matrix is
$\mathbf A_0=(\boldsymbol\Phi_K^T\boldsymbol\Phi_K+
\widehat\lambda\mathbf I)^{-1}$, consistent with
\eqref{eq:ridge-solution}. Thus RLS continues the same fit.
For $\lambda_f<1$, previous normal-equation terms, including the initial
ridge contribution, are progressively discounted.

\begin{proposition}[Decay of initialization bias]
\label{prop:forgetting}
Suppose the post-change correction is exactly representable by a constant
$\boldsymbol\beta^\star$, the features are bounded and deterministic, and
subsequent calibration noise is conditionally zero mean. For
$0<\lambda_f<1$, assume persistent excitation:
\begin{equation}
\sum_{i=k}^{k+N_{\rm pe}-1}\boldsymbol\phi_i\boldsymbol\phi_i^T
\succeq\mu\mathbf I,\qquad \mu>0.
\label{eq:persistent-excitation}
\end{equation}
Reset the probe index at the change and let
$\mathbf G_k=\mathbf A_k^{-1}$. Conditional on the selected basis,
initial state, and features,
\begin{equation}
\E[\widehat{\boldsymbol\beta}_k-\boldsymbol\beta^\star]
=\lambda_f^k\mathbf G_k^{-1}\mathbf G_0
(\widehat{\boldsymbol\beta}_0-\boldsymbol\beta^\star).
\label{eq:rls-mean-error}
\end{equation}
For $k\geq N_{\rm pe}$, its norm is at most
$\lambda_f^{k-N_{\rm pe}+1}
\|\mathbf G_0(\widehat{\boldsymbol\beta}_0-\boldsymbol\beta^\star)\|/\mu$.
\end{proposition}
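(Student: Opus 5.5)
The plan is to show that the RLS recursion computes the minimizer of an exponentially weighted regularized least-squares cost. Once that closed form is available, the mean error follows directly from the zero-mean noise.

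\textbf{Step 1: information-form recursion.} Invert the update of $\mathbf A_k$ in \eqref{eq:rls} using the matrix inversion lemma. This gives the recursion
\begin{equation*}
\mathbf G_k=\lambda_f\mathbf G_{k-1}+\boldsymbol\phi_k\boldsymbol\phi_k^T .
\end{equation*}
The same algebra turns the coefficient update into
\begin{equation*}
\mathbf G_k\widehat{\boldsymbol\beta}_k=\lambda_f\mathbf G_{k-1}\widehat{\boldsymbol\beta}_{k-1}+\boldsymbol\phi_k d_k .
\end{equation*}
The key identity needed here is $\mathbf k_k=\mathbf A_k\boldsymbol\phi_k$, which is standard RLS algebra. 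Unrolling both recursions from the reset index gives
\begin{equation*}
\mathbf G_k=\lambda_f^k\mathbf G_0+\sum_{i=1}^k\lambda_f^{k-i}\boldsymbol\phi_i\boldsymbol\phi_i^T,
\end{equation*}
\begin{equation*}
\mathbf G_k\widehat{\boldsymbol\beta}_k=\lambda_f^k\mathbf G_0\widehat{\boldsymbol\beta}_0+\sum_{i=1}^k\lambda_f^{k-i}\boldsymbol\phi_i d_i .
\end{equation*}

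\textbf{Step 2: error expression and expectation.} After the change, write $d_i=\boldsymbol\phi_i^T\boldsymbol\beta^\star+\epsilon_i$. Subtract $\mathbf G_k\boldsymbol\beta^\star$ from both sides, using the expansion of $\mathbf G_k$ from Step 1. The data terms cancel except for the noise, leaving
\begin{equation*}
\mathbf G_k(\widehat{\boldsymbol\beta}_k-\boldsymbol\beta^\star)=\lambda_f^k\mathbf G_0(\widehat{\boldsymbol\beta}_0-\boldsymbol\beta^\star)+\sum_{i=1}^k\lambda_f^{k-i}\boldsymbol\phi_i\epsilon_i .
\end{equation*}
Conditional on the basis, the initial state and the deterministic features, $\mathbf G_k$ is nonrandom and the noise terms have zero mean. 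Multiplying by $\mathbf G_k^{-1}$ and taking expectations gives \eqref{eq:rls-mean-error}.

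\textbf{Step 3: norm bound.} This step is the only place where care is needed. Since $\mathbf G_0\succ0$ (it contains the ridge term), $\mathbf G_k\succeq\sum_{i=1}^k\lambda_f^{k-i}\boldsymbol\phi_i\boldsymbol\phi_i^T$. For $k\ge N_{\rm pe}$, keep only the last $N_{\rm pe}$ indices $i=k-N_{\rm pe}+1,\ldots,k$. For these indices $\lambda_f^{k-i}\ge\lambda_f^{N_{\rm pe}-1}$, so by \eqref{eq:persistent-excitation}
\begin{equation*}
\mathbf G_k\succeq\lambda_f^{N_{\rm pe}-1}\mu\mathbf I .
\end{equation*}
The index alignment matters here: the excitation window must be taken as $\{k-N_{\rm pe}+1,\ldots,k\}$, which requires indices starting at $1$ and hence $k\ge N_{\rm pe}$. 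It follows that
\begin{equation*}
\|\mathbf G_k^{-1}\|\le\lambda_f^{-(N_{\rm pe}-1)}/\mu ,
\end{equation*}
and therefore
\begin{equation*}
\big\|\E[\widehat{\boldsymbol\beta}_k-\boldsymbol\beta^\star]\big\|\le\lambda_f^{k}\,\lambda_f^{-(N_{\rm pe}-1)}\,\|\mathbf G_0(\widehat{\boldsymbol\beta}_0-\boldsymbol\beta^\star)\|/\mu ,
\end{equation*}
which is exactly $\lambda_f^{k-N_{\rm pe}+1}\|\mathbf G_0(\widehat{\boldsymbol\beta}_0-\boldsymbol\beta^\star)\|/\mu$. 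The statement concerns only the mean (bias) and not the variance term, which is consistent with the paper's distinction between bias decay and localization-error convergence. The bounded-features assumption is not needed for this bound.
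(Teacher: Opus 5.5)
Your proof is correct and follows essentially the same route as the paper: the exponentially weighted normal equation $\mathbf G_k\widehat{\boldsymbol\beta}_k=\lambda_f^k\mathbf G_0\widehat{\boldsymbol\beta}_0+\sum_{i=1}^k\lambda_f^{k-i}\boldsymbol\phi_i d_i$, then substitution of $d_i=\boldsymbol\phi_i^T\boldsymbol\beta^\star+\epsilon_i$ with conditional expectation, then the bound $\mathbf G_k\succeq\lambda_f^{N_{\rm pe}-1}\mu\mathbf I$ from the most recent $N_{\rm pe}$ probes. The only difference is direction: the appendix defines $\mathbf G_k$ as the weighted sum and states that the matrix inversion lemma yields \eqref{eq:rls}, whereas you derive the information form from the given recursion via $\mathbf k_k=\mathbf A_k\boldsymbol\phi_k$, which is slightly more explicit.
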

\begin{IEEEproof}
The weighted normal equation retains the initial state with weight
$\lambda_f^k$, and taking expectation yields \eqref{eq:rls-mean-error}.
The most recent $N_{\rm pe}$ probes imply
$\mathbf G_k\succeq\lambda_f^{N_{\rm pe}-1}\mu\mathbf I$.
Applying this bound to $\mathbf G_k^{-1}$ proves the claim.
Appendix~\ref{app:calibration} gives the full recursion.
\end{IEEEproof}
Proposition~\ref{prop:forgetting} concerns decay of initialization bias,
not monotonic sample-path RMSE or zero steady-state error. Measurement
noise and finite-basis approximation error remain.

\begin{remark}[Trusted updates]
\label{rem:trusted}
The implemented recursion accepts only trusted probes and uses no change
detector or innovation gate. With uncertain labels, isolated and persistent residuals
may indicate outliers and a stale map, respectively,
but systematic sensor error can produce either pattern. An uncertainty-based gate would require separate validation and must not be treated as
an established benefit of the present ET.
\end{remark}

\subsection{Calibration Accuracy and Identifiability}
For fixed basis and ridge, exact reference positions, correct labels,
and homoscedastic noise variance $\sigma_c^2$, the sampling covariance is
\begin{equation}
\begin{split}
\mathbf P_{\beta,a\ell}
={}&\sigma_c^2(\boldsymbol\Phi_K^T\boldsymbol\Phi_K+
\lambda_\beta\mathbf I)^{-1}\boldsymbol\Phi_K^T\boldsymbol\Phi_K\\
&\mathord{}\cdot(\boldsymbol\Phi_K^T\boldsymbol\Phi_K+
\lambda_\beta\mathbf I)^{-1}.
\end{split}
\label{eq:beta-covariance}
\end{equation}
This follows by substituting \eqref{eq:calibration-linear} into
\eqref{eq:ridge-solution}. The corresponding ridge bias is
$-\lambda_\beta(\boldsymbol\Phi_K^T\boldsymbol\Phi_K+
\lambda_\beta\mathbf I)^{-1}\boldsymbol\beta_{a\ell}$.
Thus covariance alone is not mean-square error (MSE). Furthermore,
\eqref{eq:beta-covariance} holds for a prespecified model. It does not
account for choosing the basis and ridge from the same noisy batch.
It is used for conditional analysis, not for weighting simulated queries.

\begin{lemma}[Identifiable correction field]
\label{lem:calibration-rank}
In the noise-free unregularized model \eqref{eq:calibration-linear},
$\boldsymbol\beta_{a\ell}\in\R^D$ is uniquely identifiable if and only if
\begin{equation}
\rank(\boldsymbol\Phi_K)=D.
\label{eq:calibration-rank}
\end{equation}
\end{lemma}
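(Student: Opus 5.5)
The plan is to reduce identifiability to injectivity of the linear map $\boldsymbol\beta\mapsto\boldsymbol\Phi_K\boldsymbol\beta$ from $\R^D$ to $\R^K$. In the noise-free unregularized model \eqref{eq:calibration-linear} we have $\mathbf d_{a\ell}=\boldsymbol\Phi_K\boldsymbol\beta_{a\ell}$. Unique identifiability means that the observed innovation vector determines $\boldsymbol\beta_{a\ell}$: whenever $\boldsymbol\Phi_K\boldsymbol\beta=\boldsymbol\Phi_K\boldsymbol\beta'$, it must follow that $\boldsymbol\beta=\boldsymbol\beta'$. By linearity this is equivalent to $\ker(\boldsymbol\Phi_K)=\{\mathbf 0\}$. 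The rank--nullity theorem gives $\rank(\boldsymbol\Phi_K)+\dim\ker(\boldsymbol\Phi_K)=D$, so the kernel is trivial exactly when \eqref{eq:calibration-rank} holds.

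\emph{Sufficiency.} Assume \eqref{eq:calibration-rank}. Then $\boldsymbol\Phi_K^T\boldsymbol\Phi_K\in\R^{D\times D}$ is positive definite, because $\mathbf x^T\boldsymbol\Phi_K^T\boldsymbol\Phi_K\mathbf x=\|\boldsymbol\Phi_K\mathbf x\|_2^2>0$ for every $\mathbf x\neq\mathbf 0$. This yields the explicit recovery
\[
\boldsymbol\beta_{a\ell}=(\boldsymbol\Phi_K^T\boldsymbol\Phi_K)^{-1}\boldsymbol\Phi_K^T\mathbf d_{a\ell},
\]
which is the $\lambda_\beta\to0$ limit of \eqref{eq:ridge-solution}. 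It is also the unique zero-residual minimizer of the least-squares objective.

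\emph{Necessity.} Assume $\rank(\boldsymbol\Phi_K)<D$. Then some nonzero $\boldsymbol\delta$ satisfies $\boldsymbol\Phi_K\boldsymbol\delta=\mathbf 0$. Consequently $\boldsymbol\beta_{a\ell}$ and $\boldsymbol\beta_{a\ell}+c\boldsymbol\delta$ generate identical innovations $\mathbf d_{a\ell}$ for every $c\in\R$, so $\boldsymbol\beta_{a\ell}$ is not identifiable. In field terms, the two parameters define angular corrections $\boldsymbol\phi^T(\mathbf p)\boldsymbol\beta$ that agree at all probe positions $\mathbf p_k$. They may nonetheless differ elsewhere in $\Omega$, because $\boldsymbol\phi^T(\mathbf p)\boldsymbol\delta$ is a nonzero polynomial in $(u,v)$ that vanishes at the probes. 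This is worth noting for interpretation, since it is exactly the correction the probes cannot see.

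I do not expect a technical obstacle, since the argument is pure linear algebra. The only point requiring care is stating the notion of identifiability precisely, as injectivity of the parameter-to-observation map under the stated noise-free, correct-label, known-position assumptions. The wrapping in \eqref{eq:calibration-innovation} must also be assumed inactive, consistent with the paper calling \eqref{eq:calibration-linear} a local angular model. Otherwise the map would be linear only modulo $2\pi$. A brief concluding remark can add two consequences. First, \eqref{eq:calibration-rank} requires $K\geq D$. Second, for the affine basis it holds iff the probes are not collinear, and for the quadratic basis iff they do not all lie on a common conic, since a vanishing $\boldsymbol\phi_2^T\boldsymbol\delta$ defines such a curve. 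This connects the lemma to the probe-geometry discussion.
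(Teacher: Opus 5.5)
Your proposal is correct and follows the paper's argument. Two coefficient vectors give the same noise-free innovations exactly when their difference lies in $\ker\boldsymbol\Phi_K$, so uniqueness is equivalent to a trivial null space and, by rank--nullity, to $\rank(\boldsymbol\Phi_K)=D$. Your explicit recovery formula and kernel-direction counterexample just spell out the two directions that the paper leaves implicit.

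Your geometric remark is also sound: non-collinear probes for the affine basis, and no common conic for the quadratic one. This holds provided ``conic'' includes degenerate cases such as line pairs and double lines. For example, three points on each of two lines make $\boldsymbol\Phi_K$ rank-deficient while lying on no nondegenerate conic.
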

\begin{IEEEproof}
Two coefficient vectors produce identical innovations precisely when
their difference belongs to the null space of $\boldsymbol\Phi_K$.
Uniqueness is therefore equivalent to a trivial null space.
\end{IEEEproof}
At least $D$ probes are necessary, but spatial coverage is also needed.
For example, collinear probes leave directions unobserved in either
basis. Ridge ensures a unique numerical fit without restoring this
missing information.

Let $\mu_K=\lambda_{\min}(\boldsymbol\Phi_K^T\boldsymbol\Phi_K)>0$.
For a fixed full-rank basis, an exactly specified correction, and no ridge,
\begin{equation}
\begin{split}
\E\|\widehat{\boldsymbol\beta}_{a\ell}-\boldsymbol\beta_{a\ell}\|_2^2
&=\sigma_c^2\tr[(\boldsymbol\Phi_K^T\boldsymbol\Phi_K)^{-1}]\\
&\leq D\sigma_c^2/\mu_K.
\end{split}
\label{eq:calibration-bound}
\end{equation}
The derivation is provided in Appendix~\ref{app:calibration}.
This identity explains the role of probe geometry, rather than the MSE
of the selected ridge estimator. Once the feature directions are well
conditioned, additional probes mainly average noise. They do not remove
finite-basis bias. This distinction motivates the probe-budget and
wall-mismatch experiments.

\section{Joint Localization and Path Association}
\label{sec:problem}
Given a calibrated field, our objective is to estimate a user position
and identify the paths that generated its unordered DoAs. The field is
held fixed during each query.

\subsection{Association Constraints}
Let $x_{am\ell}=1$ if measurement $m$ at AP $a$ is assigned to path
$\ell$, and zero otherwise. Each measurement has one label, whereas a
path explains at most one measurement:
\begin{equation}
\sum_{\ell=0}^{L_a-1}x_{am\ell}=1,\qquad
\sum_{m=1}^{M_a}x_{am\ell}\leq1,\qquad x_{am\ell}\in\{0,1\}.
\label{eq:association-constraints}
\end{equation}
The inequality permits missing paths, including blocked LoS propagation.
Denote the resulting assignment set by $\mathcal X$. Unlike probabilistic
data association, which retains weighted origin hypotheses~\cite{barshalom},
we optimize explicit one-to-one labels.
Unmodeled detections would require dummy assignments with an outlier
penalty. Such detections are excluded from the present experiment.

\subsection{Problem Formulation}
The angular residual for candidate association $(a,m,\ell)$ is
\begin{equation}
r_{am\ell}(\mathbf p)=\operatorname{wrap}
[z_{am}-\theta_{a\ell}^{\rm ET}(\mathbf p)].
\label{eq:residual}
\end{equation}
We use the Huber loss
\begin{equation}
\rho_\delta(r)=
\begin{cases}
r^2/2,&|r|\leq\delta,\\
\delta|r|-\delta^2/2,&|r|>\delta,
\end{cases}
\label{eq:huber-loss}
\end{equation}
which is quadratic for small errors and grows linearly for large ones.
For positive scales $\nu_{am\ell}^2$ fixed throughout a query, the joint
problem is
\begin{subequations}
\begin{align}
\text{(P1)}\quad\min_{\mathbf p,\mathbf x}\quad&
\sum_{a,m,\ell}x_{am\ell}
\frac{\rho_\delta(r_{am\ell}(\mathbf p))}{\nu_{am\ell}^2}
\label{eq:joint-objective}\\
\mathrm{s.t.}\quad&\mathbf p\in\Omega,\quad\mathbf x\in\mathcal X.
\end{align}
\label{eq:joint-problem}
\end{subequations}
Denote its objective by $f(\mathbf p,\mathbf x)$.
The experiments use one common scale, which cancels from the minimizer.
A weighted extension could evaluate the variance in
\eqref{eq:effective-variance} at a coarse initializer and freeze the
weights, but is not tested here.

Problem~\eqref{eq:joint-problem} is nonconvex due to both nonlinear
angle geometry and discrete association. Nevertheless, fixing position
gives a linear assignment problem, and fixing association leaves a
two-dimensional continuous problem. We exploit this structure below. The convexity of the Huber loss in its scalar argument does not make
the joint problem convex.

\section{Proposed Localization Algorithm}
\label{sec:algorithm}
We initialize several candidate positions and alternate exact path
assignment with robust position refinement. All candidates minimize the
same objective in \eqref{eq:joint-objective}. Algorithm~\ref{alg:localization}
summarizes the procedure.

\subsection{Assignment-Aware Initialization}
On a grid $\mathcal G\subset\Omega$, precompute
$\theta_{a\ell}^{\rm ET}(\mathbf g)$ for all AP--path pairs. A
low-cost nearest-path initializer is
\begin{equation}
\mathbf p_{\rm near}^{(0)}=\arg\min_{\mathbf g\in\mathcal G}
\sum_{a,m}\min_\ell
\operatorname{wrap}^2[z_{am}-\theta_{a\ell}^{\rm ET}(\mathbf g)].
\label{eq:grid-init}
\end{equation}
This score permits several measurements to select the same path and
differs from the robust query objective. We retain its solution as one
start, but rank additional starts using
\begin{equation}
F(\mathbf g)=\min_{\mathbf x\in\mathcal X}f(\mathbf g,\mathbf x).
\label{eq:assignment-grid}
\end{equation}
Starting with $\mathbf p_{\rm near}^{(0)}$, we scan grid points in
ascending order of $F$ and retain a point only if it is at least 2 m
from every existing start, until $S\leq3$ starts are obtained.
This separation diversifies the local search rather than detecting
environmental change. Each start receives the same refinement budget.
For their final positions $\widehat{\mathbf p}_s$, select
\begin{equation}
\widehat{\mathbf p}
=\arg\min_{\mathbf p\in\{\widehat{\mathbf p}_1,\ldots,
\widehat{\mathbf p}_S\}}F(\mathbf p).
\label{eq:multistart-choice}
\end{equation}
The precomputed angles remain valid until the field is updated, whereas
assignment costs are query dependent. Nominal, calibrated-wall, Gaussian-process (GP),
and physical-geometry maps use the same candidate rule and budget.

\subsection{Alternating Assignment and Position Updates}
\subsubsection{Path assignment}
For fixed $\mathbf p^{(i)}$, each AP independently minimizes its sum of
costs $\rho_\delta(r_{am\ell}(\mathbf p^{(i)}))/\nu_{am\ell}^2$
subject to \eqref{eq:association-constraints}. This rectangular assignment
can be solved by the Hungarian algorithm~\cite{kuhn}. With four candidate
paths, our implementation instead enumerates at most $4!=24$ injections.
Let $\widehat\ell_{am}^{(i)}$ denote the resulting labels.

\subsubsection{Position refinement}
For fixed labels, stack the residuals into $\mathbf r(\mathbf p)$ and
define
\begin{equation}
\mathbf J(\mathbf p)=
\frac{\partial\boldsymbol\theta_{\rm assoc}^{\rm ET}(\mathbf p)}
{\partial\mathbf p^T}.
\label{eq:jacobian}
\end{equation}
For a displacement vector $\mathbf s\in\R^2$,
$\mathbf r(\mathbf p+\mathbf s)\simeq\mathbf r(\mathbf p)-\mathbf J\mathbf s$
locally. Iteratively reweighted least squares~\cite{huber,nocedal} adopts
\begin{equation}
w_j^{(i)}=\frac{1}{\nu_j^2}
\min\left\{1,\frac{\delta}{|r_j^{(i)}|}\right\},
\label{eq:irls-weights}
\end{equation}
with its continuous limit at $r_j^{(i)}=0$. Writing
$\mathbf W^{(i)}=\diag(w_j^{(i)})$, the damped linearized subproblem is
\begin{equation}
\min_{\mathbf s}\ \frac12
\|(\mathbf W^{(i)})^{1/2}(\mathbf r-\mathbf J\mathbf s)\|_2^2
+\frac{\epsilon}{2}\|\mathbf s\|_2^2,\qquad\epsilon>0.
\label{eq:gn-subproblem}
\end{equation}
The minimizing displacement, denoted by $\Delta\mathbf p^{(i)}$, satisfies
\begin{equation}
\Delta\mathbf p^{(i)}
=[\mathbf J^T\mathbf W^{(i)}\mathbf J+\epsilon\mathbf I]^{-1}
\mathbf J^T\mathbf W^{(i)}\mathbf r.
\label{eq:gn-step}
\end{equation}
The geometric Jacobian is derived in Appendix~\ref{app:jacobian}.
Because \eqref{eq:gn-subproblem} is a local approximation, we test
the original robust cost by backtracking:
\begin{equation}
\mathbf p^{(i+1)}
=\Pi_\Omega[\mathbf p^{(i)}+\alpha_i\Delta\mathbf p^{(i)}],
\qquad 0<\alpha_i\leq1.
\label{eq:position-update}
\end{equation}
A trial is accepted only if the fixed-assignment objective does not
increase. The implementation caps the direction norm at 1.5 m and tests
up to eight successively halved steps. If none is accepted, position is
unchanged. Each start stops after ten iterations or displacement below
$10^{-4}$ m. Assignment, refinement, and acceptance use the same loss. The linearized solve alone is not a global descent guarantee.

\begin{algorithm}[t]
\caption{ET-Assisted Multipath Localization}
\label{alg:localization}
\begin{algorithmic}[1]
\Require DoAs, fixed calibrated field, grid, and query scales.
\Ensure Position $\widehat{\mathbf p}$ and path labels.
\State Retain \eqref{eq:grid-init} and add separated starts using
\eqref{eq:assignment-grid}.
\For{each retained start}
\Repeat
\State Minimize assignment cost for fixed position.
\State Form weights \eqref{eq:irls-weights} and direction
\eqref{eq:gn-step}.
\State Project and backtrack using \eqref{eq:position-update}.
\Until{the stopping tolerance or iteration limit is met}
\State Store the refined position and its cost $F$.
\EndFor
\State Select \eqref{eq:multistart-choice} and return its minimizing labels.
\end{algorithmic}
\end{algorithm}

\subsection{Convergence and Complexity}
\begin{proposition}[Objective descent]
\label{prop:descent}
If assignment is solved exactly and each position update is accepted only
when its fixed-assignment cost does not increase, the sequence of
objective values is nonincreasing and convergent.
\end{proposition}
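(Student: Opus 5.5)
The argument is a block-coordinate monotonicity argument combined with a lower bound. Fix one start, and consider the sequence of pairs $(\mathbf p^{(i)},\mathbf x^{(i)})$ generated by alternating the assignment step and the backtracked position step of Algorithm~\ref{alg:localization}. The iteration $i\to i+1$ is split into two half-steps, and I will show that $f$ in \eqref{eq:joint-objective} does not increase across either one.

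\emph{Assignment half-step.} Position is held at $\mathbf p^{(i)}$. The objective $f(\mathbf p^{(i)},\mathbf x)$ separates over APs, and each AP's term is exactly the rectangular assignment cost minimized in the path-assignment step, subject to \eqref{eq:association-constraints}. Because that problem is solved exactly, by enumerating all injections or by the Hungarian algorithm, the new labels $\mathbf x^{(i+1)}$ are a global minimizer over $\mathcal X$ at fixed position. The previous labels $\mathbf x^{(i)}$ are feasible, so
\[
f(\mathbf p^{(i)},\mathbf x^{(i+1)})\le f(\mathbf p^{(i)},\mathbf x^{(i)}).
\]
A tie-breaking rule is not needed, since only the values of $f$ matter.

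\emph{Position half-step.} The labels are now fixed at $\mathbf x^{(i+1)}$. By hypothesis, the trial point $\Pi_\Omega[\mathbf p^{(i)}+\alpha_i\Delta\mathbf p^{(i)}]$ is accepted only if the fixed-assignment cost does not increase. If none of the halved steps is accepted, position stays at $\mathbf p^{(i)}$. Either way,
\[
f(\mathbf p^{(i+1)},\mathbf x^{(i+1)})\le f(\mathbf p^{(i)},\mathbf x^{(i+1)}).
\]
The projection onto $\Omega$ keeps every iterate feasible, so each value in the sequence is a value of (P1). Chaining the two inequalities shows that the sequence $f(\mathbf p^{(i)},\mathbf x^{(i)})$ is nonincreasing. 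No property of the Gauss--Newton direction \eqref{eq:gn-step} is used; this matches the paper's caveat that the linearized solve alone is not a global descent guarantee.

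\emph{Convergence.} The objective is a sum of terms $x_{am\ell}\,\rho_\delta(r_{am\ell})/\nu_{am\ell}^2$. Each term is nonnegative, because $\rho_\delta\ge0$, $\nu_{am\ell}^2>0$ and $x_{am\ell}\in\{0,1\}$. So $f\ge0$, and a nonincreasing real sequence bounded below converges by the monotone convergence theorem. This holds even when the loop is not truncated.

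The only step that needs care is checking that both half-steps evaluate the same function $f$. This requires the scales $\nu_{am\ell}^2$ and the Huber threshold $\delta$ to be frozen throughout the query, which the formulation states. The IRLS weights \eqref{eq:irls-weights} affect only the search direction, not the acceptance test. If the scales changed between iterations, monotonicity could fail, so I would state this explicitly.

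The claim is about the value sequence only. Convergence of the iterates to a stationary point is not asserted. The multistart selection \eqref{eq:multistart-choice} is applied after each start has run, and picking the minimum of $F$ over the final positions keeps the returned cost no larger than that of any individual start.
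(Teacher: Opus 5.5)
Your proof is correct and follows the paper's argument: one inequality from exact reassignment, one from accepted backtracking, and $f\geq0$ for convergence. The only difference is bookkeeping. You index the half-steps as assignment-then-position, whereas the paper's chain \eqref{eq:descent-chain} refines position under $\mathbf x^{(i)}$ first and then reassigns at $\mathbf p^{(i+1)}$; your remarks on frozen $\nu_{am\ell}^2$, $\delta$ and on feasibility under $\Pi_\Omega$ are sound details the paper leaves implicit.
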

\begin{IEEEproof}
Exact reassignment and accepted backtracking respectively imply
\begin{equation}
\begin{split}
f(\mathbf p^{(i+1)},\mathbf x^{(i+1)})
&\leq f(\mathbf p^{(i+1)},\mathbf x^{(i)})\\
&\leq f(\mathbf p^{(i)},\mathbf x^{(i)}).
\end{split}
\label{eq:descent-chain}
\end{equation}
The objective is nonnegative, so the scalar sequence converges.
\end{IEEEproof}
This result does not establish convergence to a globally optimal
position or to a stationary point under all stopping rules.

\begin{corollary}[Retained-start safeguard]
\label{cor:retained-start}
For a fixed map, loss, and per-start budget, let
$\widehat{\mathbf p}_{\rm near}$ be the refined result from
\eqref{eq:grid-init} alone. The proposed selection satisfies
\begin{equation}
F(\widehat{\mathbf p})\leq F(\widehat{\mathbf p}_{\rm near}).
\label{eq:retained-safeguard}
\end{equation}
\end{corollary}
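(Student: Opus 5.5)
The plan is to show that the near-path start is always one of the candidates compared in \eqref{eq:multistart-choice}. Once that is established, the inequality \eqref{eq:retained-safeguard} follows immediately from the definition of the minimum over a finite set.

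\emph{Step 1: the near-path start is always retained.} By the initialization rule, the candidate list is seeded with $\mathbf p_{\rm near}^{(0)}$ from \eqref{eq:grid-init}. Further grid points are added only if they are at least 2 m from every existing start. The separation test therefore never removes an existing start; it can only decline to add new ones. Hence, whatever $S\leq3$ turns out to be, one index, say $s=1$, corresponds to $\mathbf p_{\rm near}^{(0)}$.

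\emph{Step 2: this start refines to the stand-alone result.} Under a fixed map, loss, and per-start budget, the refinement of one start does not depend on the other starts. Each start runs Algorithm~\ref{alg:localization}'s inner loop independently: exact assignment, the step \eqref{eq:gn-step}, the backtracking rule \eqref{eq:position-update}, and the same stopping rule. Running the refinement from $\mathbf p_{\rm near}^{(0)}$ alone is therefore the same deterministic map applied to the same input, so $\widehat{\mathbf p}_1=\widehat{\mathbf p}_{\rm near}$. The hypothesis that the per-start budget is common matters here. If the multistart version gave the retained start fewer iterations, the two refined points could differ.

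\emph{Step 3: the selection step gives the inequality.} Since $\widehat{\mathbf p}$ minimizes $F$ over the finite set $\{\widehat{\mathbf p}_1,\ldots,\widehat{\mathbf p}_S\}$, which contains $\widehat{\mathbf p}_{\rm near}$, we get $F(\widehat{\mathbf p})\leq F(\widehat{\mathbf p}_{\rm near})$. The function $F$ in \eqref{eq:assignment-grid} is well defined at off-grid points because $\mathcal X$ is finite and nonempty (when $M_a\leq L_a$), so the minimum exists.

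The only delicate point is Step~2, namely the determinism and independence of the per-start refinement. I would state this explicitly as an implementation property, with ties in the assignment broken by a fixed rule. The corollary makes no claim about RMSE or about the global optimum, only about the value of $F$. Proposition~\ref{prop:descent} is not needed for the inequality itself; it only confirms that each start's recorded cost is well behaved.
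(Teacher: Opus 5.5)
Your proposal is correct and follows the paper's argument: the refined near-path start is one of the candidates in the finite minimization \eqref{eq:multistart-choice}, so the selected value of $F$ cannot exceed $F(\widehat{\mathbf p}_{\rm near})$. Your Steps~1--2 make explicit what the paper's one-line proof takes for granted, namely that the seeded start is never discarded and that its refinement under a common per-start budget is deterministic and independent of the other starts.
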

\begin{IEEEproof}
That same refinement is included among the candidates in
\eqref{eq:multistart-choice}.
\end{IEEEproof}
A smaller fitting cost need not imply smaller position error under
noisy or mismatched angles. The rule is fixed before evaluating queries.

For $G$ grid points, $A$ APs, and at most $L$ paths and $M\leq L$
observations, nearest-path scanning costs $\mathcal O(GALM)$.
Exact assignment adds $\mathcal O(GAL^3)$ with Hungarian matching, or
$\mathcal O(GAM L!/(L-M)!)$ with the enumeration used here.
Sorting costs $\mathcal O(G\log G)$ and candidate screening costs $\mathcal O(GS)$.
Each of the $S\leq3$ local solves uses a $2\times2$ linear system plus
assignment and line-search evaluations per iteration. The improved
initialization therefore has an explicit computational cost.
The field stores $D\in\{3,6\}$ coefficients per path, and each RLS
path update costs $\mathcal O(D^2)$. These dimensions support the small-path
implementation but do not establish scalability of enumeration.

\section{Positioning Information Analysis}
\label{sec:information}
We next relate calibration quality and reflected-path geometry to local
positioning information. The analysis assumes correct association and a
specified error model. It is distinct from the robust estimator's
finite-sample performance.

\subsection{Angular Uncertainty and Fisher Information}
For the fixed-model coefficient covariance in
\eqref{eq:beta-covariance}, the induced angle variance is
\begin{equation}
\sigma_{{\rm map},a\ell}^2(\mathbf p)
=\boldsymbol\phi^T(\mathbf p)\mathbf P_{\beta,a\ell}
\boldsymbol\phi(\mathbf p).
\label{eq:map-variance}
\end{equation}
If front-end, map, and residual model errors are independent, their
variances add:
\begin{equation}
\sigma_{{\rm eff},am\ell}^2(\mathbf p)
=\sigma_{am}^2+\sigma_{{\rm map},a\ell}^2(\mathbf p)
+\sigma_{{\rm model},a\ell}^2.
\label{eq:effective-variance}
\end{equation}
Here, the last term represents random residual error, not deterministic
wall bias. Appendix~\ref{app:uncertainty} separates variance and squared bias.
This uncertainty model is not used to weight the simulated queries.

For an available correctly associated AP--path pair $(a,\ell)\in\mathcal A$,
define
\begin{equation}
\mathbf g_{a\ell}(\mathbf p)
=\nabla_{\mathbf p}\theta_{a\ell}^{\rm ET}(\mathbf p).
\label{eq:angle-gradient}
\end{equation}
Assume independent locally Gaussian errors away from wrapping
discontinuities, with mean $\theta_{a\ell}^{\rm ET}(\mathbf p)$ and
variance $v_{a\ell}(\mathbf p)=\sigma_{{\rm eff},a\ell}^2(\mathbf p)$.
Freezing the variances locally gives the mean-gradient information
\begin{equation}
\mathbf F(\mathbf p)=\sum_{(a,\ell)\in\mathcal A}
\frac{\mathbf g_{a\ell}(\mathbf p)\mathbf g_{a\ell}^T(\mathbf p)}
{\sigma_{{\rm eff},a\ell}^2(\mathbf p)}.
\label{eq:fim}
\end{equation}
This is the full Gaussian Fisher information matrix (FIM) for
position-independent variances, as in the ideal-geometry experiment.
Otherwise, the full Gaussian information is~\cite{kay}
\begin{equation}
\mathbf F_{\rm G}(\mathbf p)=\mathbf F(\mathbf p)
+\frac12\sum_{(a,\ell)\in\mathcal A}
\frac{\boldsymbol\eta_{a\ell}(\mathbf p)\boldsymbol\eta_{a\ell}^T(\mathbf p)}
{v_{a\ell}^2(\mathbf p)}.
\label{eq:full-gaussian-fim}
\end{equation}
Here, $\boldsymbol\eta_{a\ell}(\mathbf p)
=\nabla_{\mathbf p}v_{a\ell}(\mathbf p)$ is the variance-gradient vector.
For nonsingular $\mathbf F$ and constant variances, the ideal-model
position error bound (PEB) is
\begin{equation}
\operatorname{PEB}(\mathbf p)=\sqrt{\tr(\mathbf F^{-1}(\mathbf p))}.
\label{eq:peb}
\end{equation}
As the Cram\'er--Rao bound (CRB), it applies to locally unbiased estimation
under the assumed likelihood.

\subsection{Reflected-Path Information and Local Identifiability}
\begin{proposition}[Information from a reflected path]
\label{prop:path-information}
With existing paths and variances unchanged, adding an independent,
correctly associated path changes the fixed-variance FIM by
\begin{equation}
\Delta\mathbf F
=\frac{\mathbf g_{a\ell}\mathbf g_{a\ell}^T}
{\sigma_{{\rm eff},a\ell}^2}\succeq\mathbf0.
\label{eq:fim-increment}
\end{equation}
If the original FIM is positive definite, the CRB cannot increase in
positive-semidefinite order.
\end{proposition}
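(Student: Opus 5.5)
The plan is to work directly with the additive structure of the fixed-variance FIM in \eqref{eq:fim}.

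First, let $\mathcal A$ be the original set of available, correctly associated AP--path pairs, and let $\mathcal A'=\mathcal A\cup\{(a,\ell)\}$ add the new pair. The paths are independent and the existing gradients $\mathbf g_{a'\ell'}$ and variances $\sigma_{{\rm eff},a'\ell'}^2$ are unchanged by hypothesis. Every summand in \eqref{eq:fim} over $\mathcal A$ therefore reappears unchanged in the sum over $\mathcal A'$. Subtracting the two sums leaves only the rank-one term $\mathbf g_{a\ell}\mathbf g_{a\ell}^T/\sigma_{{\rm eff},a\ell}^2$, which is exactly \eqref{eq:fim-increment}.

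Second, positive semidefiniteness follows from the quadratic form. For any $\mathbf y\in\R^2$,
\[
\mathbf y^T\Delta\mathbf F\mathbf y=(\mathbf g_{a\ell}^T\mathbf y)^2/\sigma_{{\rm eff},a\ell}^2\geq0,
\]
because $\sigma_{{\rm eff},a\ell}^2>0$. This is positive since it includes the front-end variance $\sigma_{am}^2>0$ in \eqref{eq:effective-variance}.

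Third, for the CRB statement, write $\mathbf F'=\mathbf F+\Delta\mathbf F\succeq\mathbf F\succ\mathbf0$. Then $\mathbf F'$ is positive definite and hence invertible. The remaining step is operator monotonicity of inversion on positive definite matrices: $\mathbf F'\succeq\mathbf F\succ\mathbf0$ implies $\mathbf F'^{-1}\preceq\mathbf F^{-1}$. I would prove this by congruence. Set $\mathbf C=\mathbf F^{-1/2}\mathbf F'\mathbf F^{-1/2}\succeq\mathbf I$. All eigenvalues of $\mathbf C$ are at least one, so $\mathbf C^{-1}\preceq\mathbf I$. Conjugating back with $\mathbf F^{-1/2}$ gives $\mathbf F'^{-1}\preceq\mathbf F^{-1}$.

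For the rank-one case one could instead use Sherman--Morrison:
\[
\mathbf F'^{-1}=\mathbf F^{-1}-\frac{\mathbf F^{-1}\mathbf g\mathbf g^T\mathbf F^{-1}}{\sigma^2+\mathbf g^T\mathbf F^{-1}\mathbf g}.
\]
The subtracted term is visibly positive semidefinite, which gives the same conclusion explicitly.

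Taking traces then shows that the PEB in \eqref{eq:peb} also cannot increase.

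I do not expect a real obstacle. The only point needing care is scope. The result concerns the fixed-variance matrix \eqref{eq:fim} under the paper's assumptions: correct association, independence, and variances frozen locally. It is not a statement about \eqref{eq:full-gaussian-fim} or about the estimator's finite-sample error. In \eqref{eq:full-gaussian-fim} the added path would also contribute a nonnegative variance-gradient term, so the monotonicity would persist, but I would only remark on this rather than claim it.
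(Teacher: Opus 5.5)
Your proposal is correct and follows essentially the same route as the paper: the fixed-variance FIM gains only the rank-one term $\mathbf g_{a\ell}\mathbf g_{a\ell}^T/\sigma_{{\rm eff},a\ell}^2\succeq\mathbf 0$, and inversion reverses the positive-definite order, which you prove by congruence whereas the paper simply invokes it. Your Sherman--Morrison expression is the same matrix-inversion-lemma step the paper uses in Appendix~\ref{app:information} to obtain the explicit trace (PEB) reduction.
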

\begin{IEEEproof}
The added outer product is positive semidefinite, and inversion reverses
positive-definite order. Appendix~\ref{app:information} gives the
corresponding trace reduction. The variance-gradient contribution of an
additional independent path in \eqref{eq:full-gaussian-fim} is also
positive semidefinite.
\end{IEEEproof}

For positive finite path variances, \eqref{eq:fim} further implies
\begin{equation}
\rank(\mathbf F)=2
\quad\Longleftrightarrow\quad
\operatorname{span}\{\mathbf g_{a\ell}\}_{(a,\ell)\in\mathcal A}=\R^2.
\label{eq:rank-condition}
\end{equation}
A single LoS bearing supplies rank-one information and cannot determine
range. Reflections can supply noncollinear gradients because their mirror
transforms differ, enabling local identifiability even with one AP.
The gain depends jointly on gradient orientation and calibration variance:
a path with a large modeled variance contributes less for a fixed gradient.

\begin{remark}[Scope of the information result]
\label{rem:information-scope}
Full rank is a local condition, not global uniqueness under unknown
associations. Correlated errors require a joint covariance, and
deterministic bias cannot be absorbed into a variance term. Consequently,
the ideal CRB does not certify the biased Huber estimator, path matching,
or robustness to unmodeled reflections. We report both ideal PEB and
actual localization RMSE to distinguish these effects.
\end{remark}

\section{Numerical Results}
\label{sec:numerical}
In this section, we evaluate calibration efficiency, localization under
blocked or corrupted paths, and adaptation to environmental change.
All comparisons use the same observations and solver budgets where applicable.
\subsection{Scenario and Baselines}
We consider the $20\times14$ m nominal room in Fig.~\ref{fig:geometry}. Two
APs are located at $[3,2]^T$ and $[17,2]^T$ m. Each AP has one LoS candidate
and reflections from the left, right, and top walls. The physical room is
bounded by the left, right, and top reflection planes at $x=0.35$ m,
$x=19.25$ m, and $y=14.55$ m, respectively. Calibration probes have known
positions and path labels, with independent $0.25^\circ$ angular noise.
All calibration and localization positions are drawn from the same fixed
rectangle $[1.5,17.5]\times[3,12.5]$ m, contained in every tested room.
Changing physical walls therefore does not change the query population or
place queries outside the optimizer's feasible rectangle
$\Omega=[0.5,19.5]\times[0.5,13.5]$ m.
The ET selects the affine or quadratic correction in \eqref{eq:features}
and its ridge strength using \eqref{eq:press-score}. Selection uses only
the labelled reference batch. For an online history, the selected basis is
fixed before the first user query and shared by every update policy.

Unless stated otherwise, the DoA standard deviation is $1.5^\circ$, LoS
blockage probability is 50\%, and 20 calibration probes update the ET. Every
point averages 500 independently drawn user positions, blockage events, angle
errors, and path permutations, conditional on its calibration batch. Query
realizations are paired across algorithms and reused across sweep points
where applicable. The probe-count sweep uses exact prefixes of the same
position and noise pool, rather than merely resetting a seed. No curve is
smoothed or forced to be monotone. Current DoAs are generated from physical
geometry plus Gaussian front-end error. Array-level validation using
spatial channel models such as QuaDRiGa and 3GPP TR~38.901~\cite{quadriga,threeGPP}
remains outside this study.
Table~\ref{tab:parameters} collects the default parameters.

The compared methods are:
\begin{itemize}
\item \emph{LoS + fallback}: bearing intersection with known surviving LoS
labels when at least two independent bearings are present, and a room-center
decision otherwise. The latter is a fallback policy, not a successful
angle-based position estimate.
\item \emph{static geometry}: all available paths and nominal walls, without
physical calibration.
\item \emph{updated ET}: nominal walls plus sparse-probe correction.
\item \emph{oracle geometry}: all paths and physical wall parameters.
\end{itemize}
All map-based methods use the same Huber loss for assignment and position
refinement, the same grid, three candidate starts with 2-m separation,
ten iterations per start, and a common angular weight. The simulations do
not enable covariance weighting, dummy outlier assignments, or innovation
screening. Thus gains should be attributed to calibrated angle predictions
and continued trusted-probe updates, not to these untested extensions.
The RMSE sweeps use linear axes starting at zero. LoS availability and
all-user positioning success
are reported separately, rather than allowing the room-center fallback
error to set the scale of every map-accuracy plot. Complete four-method
RMSE data and plots remain available in the reproducibility files.

The experiments address three distinct questions: whether physical
calibration removes nominal-map bias, whether reflected paths preserve
positioning service under LoS blockage, and whether new probes correct
a stale map after an environmental change. None of these questions requires
the compact ET to outperform every calibrated representation. We therefore
include strong geometric and nonlinear-map baselines in the main noise and
probe-budget sweeps, rather than use separate, repeated sweeps to make the
same comparison. The robust-loss experiment isolates a solver property,
not an advantage unique to the ET representation.

\begin{table}[t]
\centering
\caption{Default Localization Parameters}
\label{tab:parameters}
\begin{tabular}{lc@{\qquad}lc}
\toprule
Parameter & Value & Parameter & Value\\
\midrule
Room size & $20\times14$ m & APs & 2\\
Paths/AP & 4 & DoA std. & $1.5^\circ$\\
Blockage & 50\% & Probes & 20\\
Grid step & 1 m & Realizations & 500\\
Huber threshold & $3^\circ$ & GN iterations & 10\\
Candidate starts & 3 & Separation & 2 m\\
Field dimension & 3 or 6 & Online forgetting & 0.96\\
\bottomrule
\end{tabular}
\end{table}

\subsection{Compared Representations}
The comparison distinguishes measured fingerprints, model-generated
multipath references~\cite{wielandtAoA}, and learned spatial fields. The
implementations, summarized in Table~\ref{tab:diverse-baselines}, share
this paper's unordered-angle interface. They are not
claimed to reproduce the array-spectrum processing, delay observations, or
tracking systems of other papers.

\begin{table}[t]
\centering
\caption{Inputs and Roles of Calibrated Baselines}
\label{tab:diverse-baselines}
\footnotesize
\setlength{\tabcolsep}{4pt}
\begin{tabular}{>{\raggedright\arraybackslash}p{.22\columnwidth}>{\raggedright\arraybackslash}p{.32\columnwidth}>{\raggedright\arraybackslash}p{.34\columnwidth}}
\toprule
Method & Calibration and prior & Main comparison\\
\midrule
Measured WKNN & Probe angles/positions without floor plan & Direct fingerprint lookup\\
Wall-NLS & Same probes and nominal wall model & Correct geometry parameters\\
GP map & Same probes and nominal wall model & Nonlinear residual field\\
ET-LS & Same selected ET field & Quadratic query loss\\
ET-Cauchy & Same selected ET field & Alternative robust loss\\
ET-Huber & Same selected ET field & Reported query solver\\
\bottomrule
\end{tabular}
\end{table}

\emph{Measured weighted $K$-nearest-neighbor (WKNN) fingerprinting}
stores the observed angle set at each reference location. Its distance
$D_j$ is the average squared wrapped-angle difference after minimum-cost
one-to-one matching of each query set to reference $j$. An absent LoS
measurement need not be imputed. The estimate is
\begin{equation}
\widehat{\mathbf p}_{\rm WKNN}=
\frac{\sum_{j\in\mathcal N_K}w_j\mathbf p_j}
{\sum_{j\in\mathcal N_K}w_j},\quad
w_j=\frac{1}{\max\{D_j,10^{-8}\}},
\label{eq:wknn-baseline}
\end{equation}
where the floor prevents division by zero.
The neighbor count is selected from $\{1,3,5\}$ by leave-one-reference-out
position error, averaging the four LoS-visibility patterns of the two APs.
No test-user position is used. Unlike a dense fingerprint database, this
baseline is given the same sparse physical probe budget.

\emph{Wall nonlinear least squares (Wall-NLS)} calibrates the left, right,
and top reflection-plane coordinates $\mathbf w$ directly:
\begin{equation}
\widehat{\mathbf w}=\arg\min_{\mathbf w}
\sum_{a,\ell,i}\left[
\operatorname{wrap}\{\widetilde\theta_{a\ell,i}
-\theta^{\rm geo}_{a\ell}(\mathbf p_i;\mathbf w)\}\right]^2.
\label{eq:wall-calibration}
\end{equation}
The room must contain the known reference positions. A three-variable
Nelder--Mead search starts from the nominal floor plan, without access to
physical-wall truth. The fitted geometry then applies the same Huber association
and refinement as the ET. This is a particularly strong baseline here:
its parameterization matches the simulator's planar reflection model.

\emph{Gaussian-process (GP) residual mapping} replaces the finite-order
correction with a squared-exponential spatial kernel. For each
labelled path, its posterior-mean correction is~\cite{rasmussenGP}
\begin{equation}
\widehat d(\mathbf p)=\overline d+
\mathbf k(\mathbf p)^T(\mathbf K+\sigma_c^2\mathbf I)^{-1}
(\mathbf d-\overline d\mathbf 1).
\label{eq:gp-baseline}
\end{equation}
The empirical mean and hyperparameters use calibration data only. Kernel
lengths are selected from $\{2,4,8,16,32\}$ m and signal variances from
$\{0.25,1,4,16\}\max\{\operatorname{var}(\mathbf d),\sigma_c^2\}$ by
training marginal likelihood. Query positions are estimated by the same
Huber solver. Predictive variance is not used to introduce a second change
to the comparison.

\emph{ET-LS and ET-Cauchy} keep the selected calibration unchanged and replace
only the angular query loss. The alternatives are
\begin{equation}
\rho_Q(r)=\frac{r^2}{2},\qquad
\rho_C(r)=\frac{c^2}{2}\log\!\left(1+\frac{r^2}{c^2}\right),
\end{equation}
respectively, with $c=3^\circ$.
Each of the assignment, iteratively reweighted refinement, and
backtracking stages uses the selected loss consistently. These are ablations of the query solver,
not independent new map architectures. The nominal Huber map and
physical-geometry Huber reference remain in the comparison.
The ``Selected ET'' legend denotes ET-Huber with the field chosen by
\eqref{eq:press-score}, not a per-query selection among competing methods.

The noise and probe-budget sweeps below use the same fixed 80-probe pool
and 500 held-out queries per point. Probe budgets are exact nested
prefixes, including measurement noise. Every method sees identical query
positions, angles, and blockage events, while the map-based solvers use
the same initialization grid, multistart rule, and iteration budget. WKNN
selection, GP fitting, and ET model selection use calibration data only.
The other controlled sweeps retain
their original paired samples. Numerical differences across figures should
not be interpreted as comparisons on one identical calibration batch.

\subsection{Effect of DoA Noise}
\begin{figure}[t]
\centering
\includegraphics[width=.97\columnwidth]{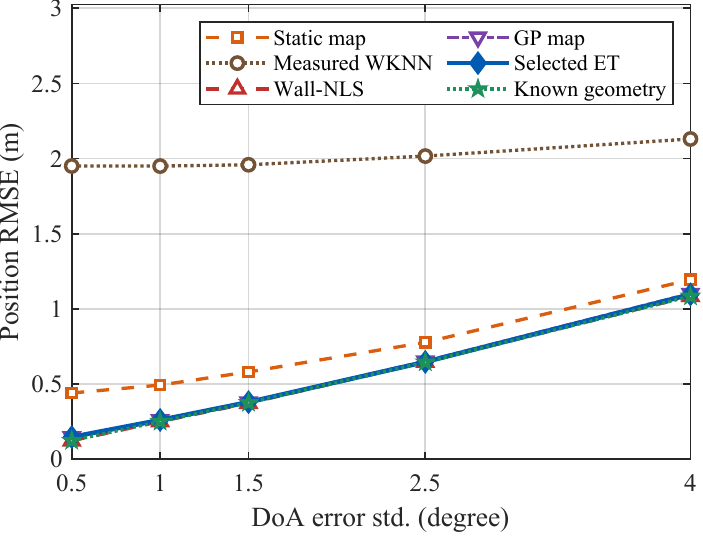}
\caption{DoA-noise sensitivity with calibrated baselines.}
\label{fig:doa-noise}
\end{figure}

Fig.~\ref{fig:doa-noise} distinguishes two regimes. At low DoA noise,
the nominal map is limited by geometry mismatch, whereas physical
calibration substantially reduces this bias. At higher noise, the
calibrated-map curves approach one another because angular measurement
error increasingly controls the position estimate. Improving the map and
the DoA front end therefore address different error sources.
The ET lies close to the calibrated geometric and GP alternatives, but
does not dominate them: at $1.5^\circ$, Wall-NLS, GP mapping, and ET-Huber
give \LocDiverseWall, \LocDiverseGp, and \LocDiverseHuber\ m,
respectively. In particular, Wall-NLS is well matched to the three-plane
simulator. This figure supports the use of a calibrated directional state,
not a unique accuracy advantage of its polynomial parameterization. At high
noise, choosing a richer field or a lower query objective need not improve
position RMSE. The selection criterion does not optimize unknown test locations.

The common-batch association diagnostic gives correct-association
probabilities above 94\% for the nominal, calibrated, and physical-geometry
maps even at $4^\circ$ DoA error, with little separation between them.
Thus, the calibration gain mainly reflects more accurate predicted angles,
not a substantially better combinatorial matcher.

\subsection{LoS Blockage}
\begin{figure}[t]
\centering
\includegraphics[width=.97\columnwidth]{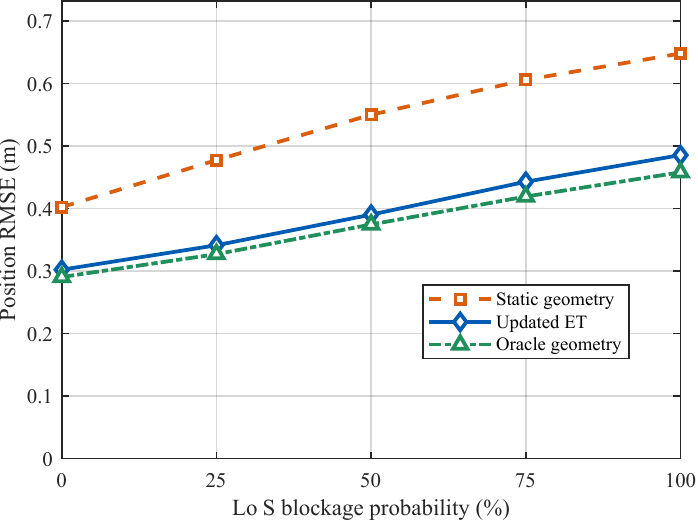}
\caption{Map-based position RMSE versus LoS blockage probability.}
\label{fig:blockage}
\end{figure}

In Fig.~\ref{fig:blockage}, static geometry and the calibrated ET obtain
\LocStaticBlockHalf\ and \LocEtBlockHalf\ m RMSE at 50\% LoS blockage.
With complete LoS blockage, their errors are \LocStaticBlockFull\ and
\LocEtBlockFull\ m. The relevant gain is the ability to use reflected
directions after LoS constraints disappear, together with the correction of
nominal-wall bias.

LoS performance must be interpreted separately. With two independently blocked
LoS links and per-link blockage probability $b$, bearing availability is
$(1-b)^2$, not one. At $b=0.5$, empirical availability is
\LocLosAvailability\%, and conditional RMSE on those available cases is
\LocLosConditional\ m. The all-query value \LocLosBlockHalf\ m includes
room-center fallback on unavailable cases. Its large value cannot be
presented as the error of a functioning LoS estimator. Availability and
conditional RMSE are exported separately with the numerical results.

\begin{figure}[t]
\centering
\includegraphics[width=.97\columnwidth]{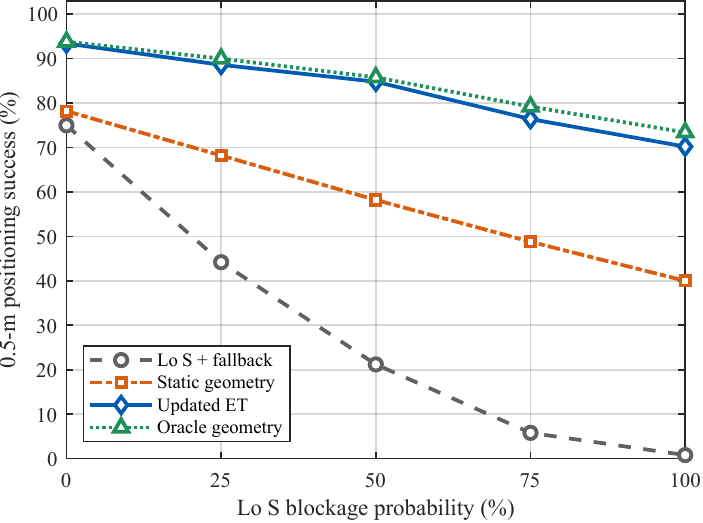}
\caption{All-user probability of position error within 0.5 m.}
\label{fig:service-success}
\end{figure}

Fig.~\ref{fig:service-success} asks a different, service-oriented question:
what fraction of users can be positioned within a specified tolerance?
For the same 500 queries used in Fig.~\ref{fig:blockage}, we report
\begin{equation}
\widehat P_{0.5}=\frac{1}{N}\sum_{n=1}^{N}
\mathbf 1\{\|\widehat{\mathbf p}_n-\mathbf p_n\|_2\leq0.5\ {\rm m}\}.
\label{eq:service-success}
\end{equation}
The 0.5-m tolerance is an illustrative indoor requirement, not a claimed
standard or an optimized threshold. All users remain in the denominator,
including unavailable LoS cases. Even a coincidentally accurate fallback
counts as a success. At 50\% blockage, LoS with fallback, the nominal map,
and the calibrated ET give \LocServiceLosHalf\%,
\LocServiceStaticHalf\%, and \LocServiceEtHalf\%, respectively.
With all LoS paths blocked, the latter two give
\LocServiceStaticFull\% and \LocServiceEtFull\%. This comparison separates
the value of retaining reflected-path constraints from the additional
value of calibrating their directions. The ideal-geometry reference remains
visible, so that the remaining calibration gap is not hidden.

\subsection{Calibration Budget and Model Mismatch}
\begin{figure}[t]
\centering
\includegraphics[width=.97\columnwidth]{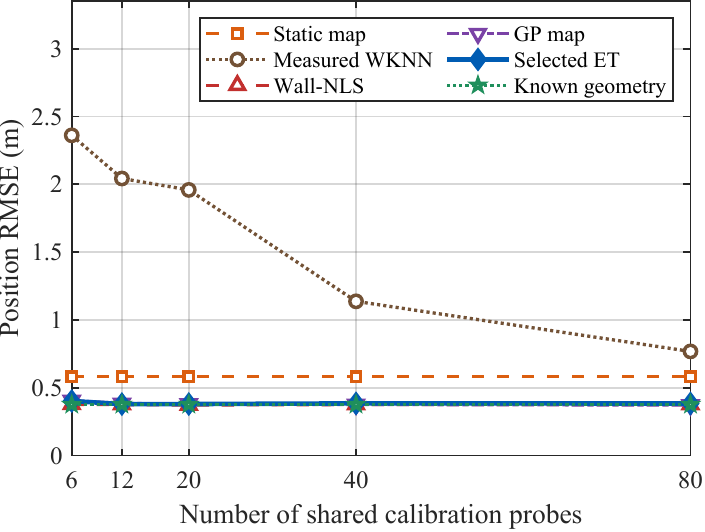}
\caption{Position RMSE versus shared calibration budget.}
\label{fig:calibration}
\end{figure}

Fig.~\ref{fig:calibration} asks how effectively each representation uses
the same physical probes. The calibrated model-based methods work with
relatively few probes, whereas sparse fingerprint lookup improves more
gradually as reference locations cover the room. At 80 probes, WKNN gives
\LocKnnDense\ m, compared with \LocGpDense\ m for the GP and
\LocHuberDense\ m for the selected ET. This difference reflects both
spatial interpolation and prior information: WKNN is not supplied with
the floor plan. It is not evidence of superiority over dense fingerprinting.

The more important comparison is within the calibrated maps. Wall-NLS
employs a correct physical parameterization, GP mapping learns a flexible
residual field, and the ET retains only three or six coefficients per path.
In this nested calibration pool, LOOCV selects the affine field with
$\lambda_\beta=0.1$ at six probes, and the quadratic field with
$\lambda_\beta=0.01$ at the other budgets. This change is decided from
held-out angular innovations, not localization errors. Their close errors
show that a compact correction captures much of the useful map information
in this scene. A correctly specified wall model can still do better.
Additional probes reduce estimation uncertainty but do not remove query
noise or all approximation error. Small nonmonotonic changes are retained. Noisy finite
calibration does not guarantee monotonically decreasing test RMSE.

\begin{figure}[t]
\centering
\includegraphics[width=.97\columnwidth]{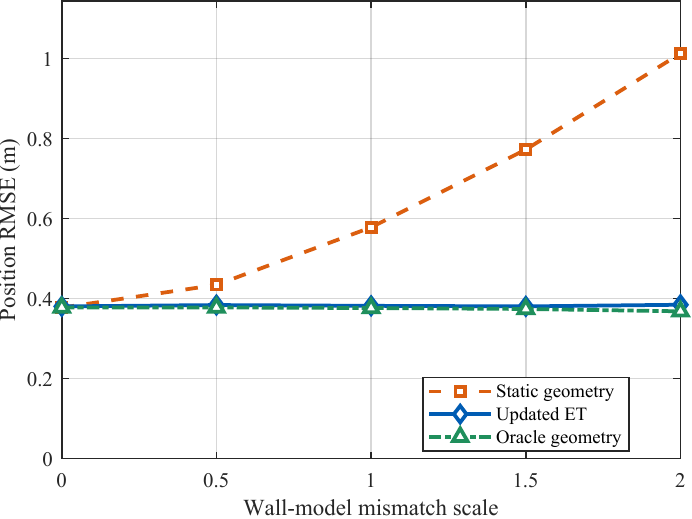}
\caption{Position RMSE versus physical-to-nominal wall mismatch scale.}
\label{fig:wall-mismatch}
\end{figure}

Fig.~\ref{fig:wall-mismatch} scales physical wall displacement from zero
to twice its default value. At the largest mismatch, static geometry gives
\LocWallStaticEnd\ m and the calibrated ET gives \LocWallEtEnd\ m.
At zero mismatch, calibration is unnecessary and can introduce an
estimation cost by fitting probe noise.
Thus the useful regime is an informative but imperfect floor plan, not an
already exact map. Increasing angular noise can likewise conceal the value
of improved geometry, as Fig.~\ref{fig:doa-noise} demonstrates.
Each scene receives its own trusted calibration probes, so this is a
recalibration test rather than zero-shot generalization to unseen walls.
The remaining gap to the perfect-map reference reflects both finite probe
noise and the limited spatial flexibility of the correction field.

\subsection{Robustness to Gross Angle Errors}

\begin{figure}[t]
\centering
\includegraphics[width=.97\columnwidth]{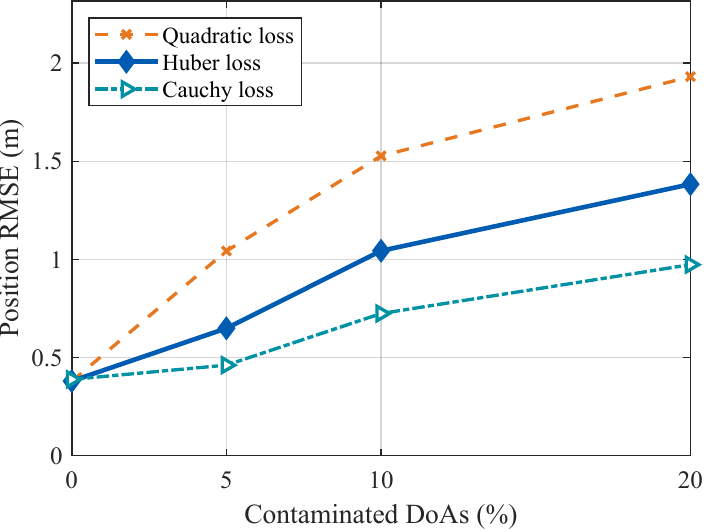}
\caption{Loss-function comparison on a common calibrated map.}
\label{fig:robust-losses}
\end{figure}

Fig.~\ref{fig:robust-losses} isolates robustness from representation. Each
observed query angle independently receives an additional zero-mean
$20^\circ$-standard-deviation Gaussian perturbation with probability
$p_o\in\{0,0.05,0.10,0.20\}$, on top of the common $1.5^\circ$ noise.
The physical probes remain clean, the path count is unchanged, and all
methods receive exactly the same corrupted observations. This controlled
gross-angle-error test is not a simulation of false detections or unresolved
coherent sources. At $p_o=0.20$, quadratic, Huber, and Cauchy query losses
give \LocContamLs, \LocContamHuber, and \LocContamCauchy\ m RMSE.
The three losses are close without gross errors, so Gaussian-noise curves
alone provide little evidence for a robust-loss advantage. In this
contamination test, Huber improves on quadratic loss, while the fixed-scale
Cauchy alternative yields the lowest RMSE at the tested nonzero contamination
levels. Its redescending influence suppresses very large angular residuals
more strongly. Huber is retained as the fixed default throughout the other
experiments, not switched to Cauchy after inspecting contaminated test cases.
A convex residual penalty is convenient, but it does not make the joint
position-and-association problem convex or make Huber uniformly optimal.
This is a conditional robustness comparison, not an optimality claim across
noise models or tuning choices. The residual
error remains substantial because corrupted angles can still mislead the
initialization or path assignment. Robust refinement does not solve arbitrary
outlier association.

\subsection{Number of APs and Paths}
\begin{figure}[t]
\centering
\includegraphics[width=.97\columnwidth]{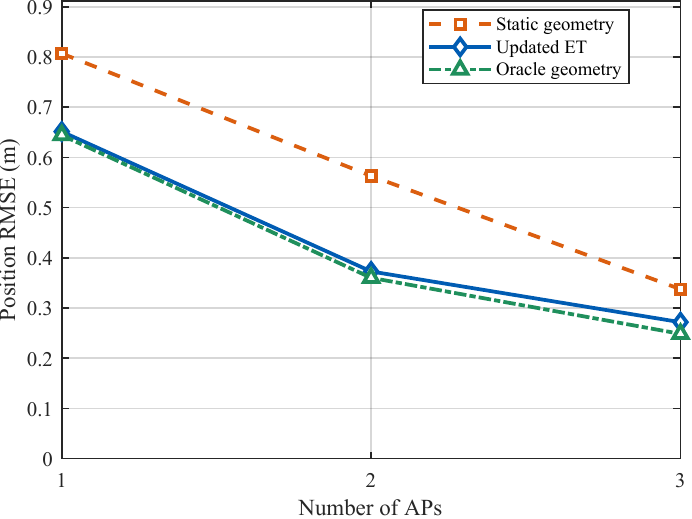}
\caption{Position RMSE versus number of available APs.}
\label{fig:ap-count}
\end{figure}

Fig.~\ref{fig:ap-count} tests whether reflected paths compensate for fewer
APs. With one AP, a single LoS bearing does not identify range. Its all-query
baseline comes entirely from the fallback policy, and is omitted from this
map-only RMSE plot. The calibrated multipath map
instead gives \LocOneApEt\ m because reflected-path gradients can supply
independent spatial constraints. Three APs reduce its error to
\LocThreeApEt\ m. This result depends on the known reflector model and
resolvable reflected paths. It does not imply generic single-AP
identifiability in an arbitrary environment.

\begin{figure}[t]
\centering
\includegraphics[width=.97\columnwidth]{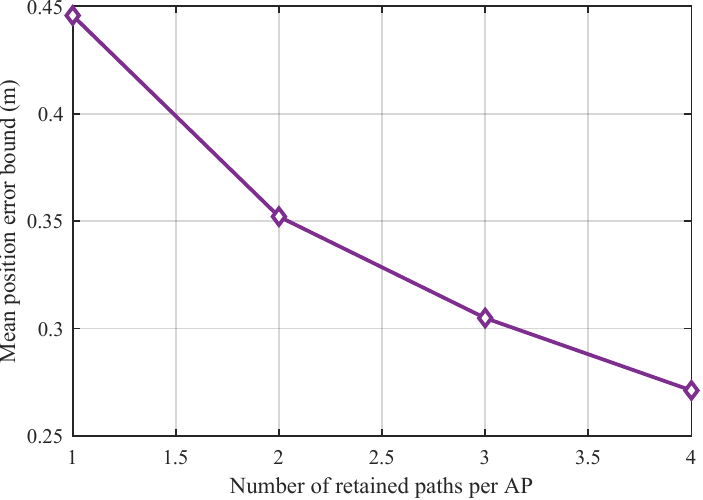}
\caption{Mean ideal-geometry PEB versus number of retained paths per AP.}
\label{fig:fim-paths}
\end{figure}

The information calculation in Fig.~\ref{fig:fim-paths} uses correct
associations and physical geometry. Mean PEB decreases from \LocPebOne\ m
with one path per AP to \LocPebFour\ m with all four. The monotonic trend
matches Proposition~\ref{prop:path-information}. These values use all retained paths without random
LoS blockage and therefore must not be read as matched lower bounds on the
blocked, biased, unknown-association experiments. They isolate the effect
of additional ideal geometric constraints.

\subsection{Persistent Recovery after Environmental Change}
\begin{figure}[t]
\centering
\includegraphics[width=.97\columnwidth]{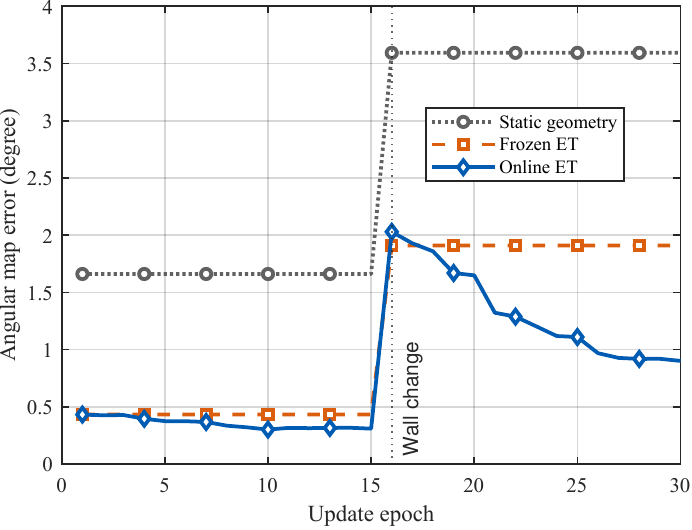}
\caption{Angular correction-map error during persistent operation. The
physical walls change at epoch 16, and two trusted probes update the online ET
after every epoch.}
\label{fig:online-map}
\end{figure}

The preceding experiments recalibrate the ET for each fixed environment.
Fig.~\ref{fig:online-map} instead retains one state across 30 epochs. Online
and frozen ETs start from the same 12-probe batch calibration. At epoch 16,
the left, right, and top reflection planes change to $x=0.72$ m,
$x=18.55$ m, and $y=15.15$ m. Localization and map evaluation occur before
that epoch's two new trusted probes are assimilated, so neither method can
anticipate the change. RLS uses $\lambda_f=0.96$ and the initial batch
normal matrix, with no innovation gate or change-triggered reset. Its
initially selected basis remains fixed throughout the 30 epochs.
The online map error changes from $\LocMapJump^\circ$ at the change to
$\LocMapEnd^\circ$ at the final epoch, compared with
$\LocFrozenMapEnd^\circ$ for the frozen map. Their different errors at the
change reflect their different pre-change coefficients, not Monte Carlo
variation in the deterministic map metric.

For the same calibration sequence, position RMSE decreases from
\LocOnlineJump\ m at epoch 16 to \LocOnlineEnd\ m at epoch 30,
versus \LocFrozenEnd\ m for the frozen map and \LocStaticOnlineEnd\ m
for nominal geometry. Each epoch uses 500 queries paired across methods
and epochs through common random numbers. This is one evolving calibration
history, not 500 independent map histories or a tracked user trajectory.
The next experiment tests localization over independent update histories.

\begin{remark}[Querying without self-calibration]
Current user DoAs are always used for positioning, but only trusted probes are
used to modify the map in this experiment. A high-confidence user estimate
would require the additional tests discussed in Section~\ref{sec:trusted-update}. Keeping the query
path and update path separate is essential: a doubtful position may benefit
from the ET without being allowed to corrupt future positions.
\end{remark}

\subsection{Comparison of Online Update Policies}
An online-versus-frozen comparison alone cannot establish an advantage
over other ways of using new probes. We therefore compare three updates
of the same correction field: all-history least squares (LS), LS on the
most recent 12 probes, and RLS with $\lambda_f=0.96$. All start from the
same 12-probe calibration and receive the same two new labelled probes
after each epoch. The frozen map and physical-geometry reference are
retained as controls. The query grid, association, robust refinement,
and DoA observations are identical for every method.

For all-history and sliding-window LS, let $\mathcal I_t$ contain all
available probes or the last 12, respectively. Each solves
\begin{equation}
\widehat{\boldsymbol\beta}_{a\ell,t}=
\arg\min_{\boldsymbol\beta}
\sum_{i\in\mathcal I_t}
(d_{a\ell,i}-\boldsymbol\phi_i^T\boldsymbol\beta)^2
+\widehat\lambda\|\boldsymbol\beta\|_2^2.
\label{eq:update-baselines}
\end{equation}
Here, $\widehat\lambda$ and the basis are selected from each history's
initial batch and then shared and frozen across all update policies.
The RLS recursion instead progressively discounts older information.
Its initial ridge contribution decays with the initial normal matrix.
There is no oracle change detector, reset at epoch 16, or access to future
probes. The window length and forgetting factor are fixed rather than
selected to favor one method on the test queries. All three adaptive
methods implement the persistent-state framework. The comparison tests
memory policies, not an ET label against otherwise identical LS.

\begin{figure}[t]
\centering
\includegraphics[width=.97\columnwidth]{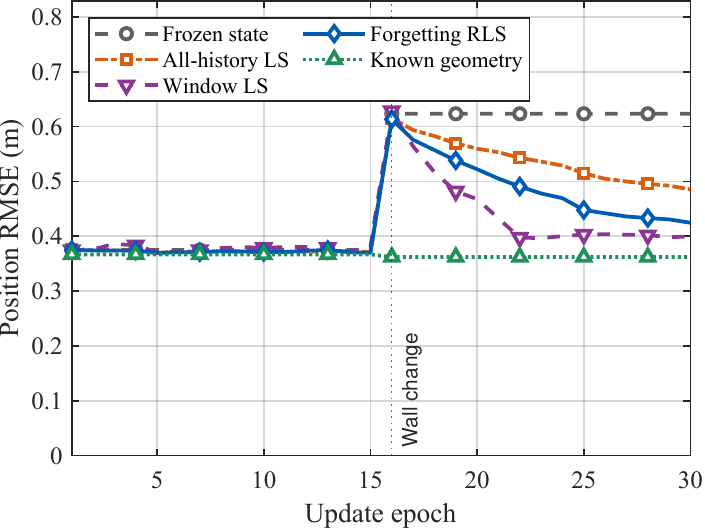}
\caption{Update-policy comparison with identical probe streams.}
\label{fig:update-benchmark}
\end{figure}

Fig.~\ref{fig:update-benchmark} averages squared error over 10 independent
calibration/query histories with 50 queries each at every epoch, then
takes the square root. All histories use the same prescribed wall-change
schedule. They are not 10 different environments. The first 15 epochs
provide an unchanged-scene control, while epochs 16--30 test the use of
stale measurements. The close pre-change curves are a useful control:
updating an already accurate state should not by itself create a large
gain. Once the walls change, the frozen state retains an outdated angular
field. All-history LS incorporates new evidence, but old observations
continue to pull its fit toward the previous environment. RLS reduces
this influence gradually, whereas the 12-probe window removes old
evidence sooner. The post-change RMSE is \LocFairWindowPost\ m for the
window and \LocFairRlsPost\ m for RLS.

\begin{remark}[Representation versus memory policy]
The frozen-versus-adaptive comparison supports maintaining the stored
path-angle state. Differences among adaptive curves concern memory policy,
not the ET representation itself. The configured RLS is not the optimum among the tested policies
for this abrupt change. A short window discards stale evidence faster but
also loses potentially useful probes, whereas RLS discounts them gradually.
The present result does not determine the best policy for gradual motion,
noisier probes, or other scenes.
\end{remark}

\section{Conclusion}
\label{sec:conclusion}
This paper formulated multipath DoA localization as a query to a persistent
electromagnetic state. Nominal virtual-anchor geometry provided physical
structure, calibration-only model selection set the correction-field order
and ridge strength, and continued trusted updates maintained that correction.
Assignment-aware multistart search retained the original initialization
while examining other association basins. Joint
association and robust position refinement remained computationally simple
because the path set was small and the continuous variable was two
dimensional. The analysis and simulations support a precise conclusion: correctly
associated reflected paths can add positioning information and preserve rank
under LoS blockage, but their benefit depends on physical calibration.  

\appendices
\section{Virtual-Path Jacobian}
\label{app:jacobian}
Let
\begin{equation}
\mathbf d_{a\ell}(\mathbf p)=
\mathbf H_{a\ell}\mathbf p+\mathbf h_{a\ell}-\mathbf q_a
=[d_x,d_y]^T.
\end{equation}
For angle $\theta=\operatorname{atan2}(d_y,d_x)$,
\begin{equation}
\frac{\partial\theta}{\partial\mathbf d^T}
=\frac{1}{d_x^2+d_y^2}[-d_y,\ d_x].
\end{equation}
Therefore,
\begin{equation}
\nabla_{\mathbf p}\theta_{a\ell}^{\rm geo}
=\mathbf H_{a\ell}^T
\frac{[-d_y,\ d_x]^T}{d_x^2+d_y^2},
\end{equation}
and the ET correction adds
$\nabla_{\mathbf p}(\boldsymbol\phi^T\boldsymbol\beta_{a\ell})$.
For the quadratic basis, suppressing path indices and numbering coefficients
from one, the additional derivatives are
\begin{equation}
\begin{split}
\frac{\partial(\boldsymbol\phi_2^T\boldsymbol\beta)}{\partial p_x}
&=\frac{\beta_2+2u\beta_4+v\beta_5}{W},\\
\frac{\partial(\boldsymbol\phi_2^T\boldsymbol\beta)}{\partial p_y}
&=\frac{\beta_3+u\beta_5+2v\beta_6}{H}.
\end{split}
\end{equation}
The affine case follows by setting $\beta_4=\beta_5=\beta_6=0$.
The geometric derivatives above are in radians, while correction derivatives
inherit the units of their coefficients. Both terms must use the same unit
before addition. The implementation uses degrees for angular derivatives
and noise standard deviations, preserving the same information after
consistent unit conversion.
The implementation employs symmetric numerical differentiation so that the same
solver also supports the quadratic correction and the GP baseline.

\section{Map-Uncertainty Propagation}
\label{app:uncertainty}
Let
$\widehat{\boldsymbol\beta}_{a\ell}
=\boldsymbol\beta_{a\ell}+\widetilde{\boldsymbol\beta}_{a\ell}$,
with error mean $\mathbf b_{a\ell}$ and covariance
$\mathbf P_{\beta,a\ell}$. Linear correction error at position $\mathbf p$ is
\begin{equation}
e_{\rm map}(\mathbf p)=
\boldsymbol\phi^T(\mathbf p)
\widetilde{\boldsymbol\beta}_{a\ell}.
\end{equation}
Hence
\begin{align}
\E[e_{\rm map}]&=\boldsymbol\phi^T(\mathbf p)\mathbf b_{a\ell},\\
\operatorname{var}(e_{\rm map})
&=\boldsymbol\phi^T(\mathbf p)\mathbf P_{\beta,a\ell}
\boldsymbol\phi(\mathbf p).
\end{align}
The mean-square error additionally contains
$(\boldsymbol\phi^T\mathbf b_{a\ell})^2$.
If front-end angle error, map error, and residual model error are independent,
their variances add, giving \eqref{eq:effective-variance}. A nonzero mean
remains a separate bias and is not removed by variance weighting.

\section{Information Monotonicity}
\label{app:information}
Let $\mathbf F_0\succ\mathbf0$ be information from an existing correctly modeled path
set and $\Delta\mathbf F\succeq\mathbf0$ be \eqref{eq:fim-increment}. Then
\begin{equation}
\mathbf F_0+\Delta\mathbf F\succeq\mathbf F_0.
\end{equation}
Matrix inversion reverses the positive-definite order:
\begin{equation}
(\mathbf F_0+\Delta\mathbf F)^{-1}
\preceq\mathbf F_0^{-1}.
\end{equation}
Taking trace proves that the squared PEB cannot increase. For one
finite-variance nonzero gradient, the reduction is strictly positive. The
matrix inversion lemma gives
\begin{equation}
\tr(\mathbf F_0^{-1})-
\tr[(\mathbf F_0+\mathbf g\mathbf g^T/v)^{-1}]
=\frac{\mathbf g^T\mathbf F_0^{-2}\mathbf g}
{v+\mathbf g^T\mathbf F_0^{-1}\mathbf g}.
\end{equation}
Equality requires $\mathbf g=\mathbf0$ for finite $v$, or is approached as
$v\to\infty$. The rank condition for an initially singular matrix must be
checked separately.

\section{Calibration Rank, Error, and Recursive Form}
\label{app:calibration}
For the unregularized batch model, suppose two coefficients
$\boldsymbol\beta_1$ and $\boldsymbol\beta_2$ explain the same noise-free probe
innovations. Then
\begin{equation}
\boldsymbol\Phi_K(\boldsymbol\beta_1-\boldsymbol\beta_2)=\mathbf0.
\end{equation}
Uniqueness requires a trivial null space of $\boldsymbol\Phi_K$,
equivalently \eqref{eq:calibration-rank}. Under full column rank,
ordinary LS has error
\begin{equation}
\widetilde{\boldsymbol\beta}
=(\boldsymbol\Phi_K^T\boldsymbol\Phi_K)^{-1}
\boldsymbol\Phi_K^T\boldsymbol\epsilon
\end{equation}
and covariance
$\sigma_c^2(\boldsymbol\Phi_K^T\boldsymbol\Phi_K)^{-1}$. Taking the trace and
using
$\tr[(\boldsymbol\Phi_K^T\boldsymbol\Phi_K)^{-1}]
\leq D/\mu_K$
proves \eqref{eq:calibration-bound}.

For recursive operation, define the exponentially weighted normal matrix
\begin{equation}
\mathbf G_k=
\sum_{i=1}^{k}\lambda_f^{k-i}
\boldsymbol\phi_i\boldsymbol\phi_i^T+
\lambda_f^k\mathbf G_0.
\end{equation}
Applying the matrix inversion lemma to
$\mathbf A_k=\mathbf G_k^{-1}$ yields the gain and inverse-normal recursion in
\eqref{eq:rls}. Persistent excitation prevents $\mathbf G_k$ from losing rank.
The right-hand side of the normal equation is
$\mathbf h_k=\lambda_f^k\mathbf G_0\widehat{\boldsymbol\beta}_0+
\sum_{i=1}^k\lambda_f^{k-i}\boldsymbol\phi_i d_i$.
Substituting $d_i=\boldsymbol\phi_i^T\boldsymbol\beta^\star+\epsilon_i$
and taking conditional expectation results in \eqref{eq:rls-mean-error}.
For independent noise, the coefficient sampling covariance instead requires
\begin{equation}
\mathbf P_{\beta,k}=\mathbf A_k\left[
\lambda_f^{2k}\mathbf G_0\mathbf P_{\beta,0}\mathbf G_0+
\sigma_c^2\sum_{i=1}^k\lambda_f^{2(k-i)}
\boldsymbol\phi_i\boldsymbol\phi_i^T\right]\mathbf A_k^T.
\end{equation}
This covariance assumes independence between the initial fit and subsequent
probe noise and excludes deterministic model bias. Thus the RLS inverse
normal matrix alone is not a calibrated uncertainty estimate.

\end{document}